\documentclass[aps,pra,twocolumn,superscriptaddress,floatfix,
nofootinbib,showpacs,longbibliography]{revtex4-2}
\usepackage[utf8]{inputenc}
\usepackage[T1]{fontenc}     
\usepackage[british]{babel}  
\usepackage[sc,osf]{mathpazo}
\usepackage[scaled=0.86]{berasans}
\usepackage[colorlinks=true, citecolor=blue, urlcolor=blue]{hyperref}  
\usepackage{graphicx}
\usepackage[babel]{microtype}  
\usepackage{amsmath,amssymb,amsthm,bm,amsfonts,mathrsfs,bbm} 

\usepackage{xspace}  
\usepackage{pgf,tikz}
\usepackage{xcolor}
\usepackage{multirow}
\usepackage{array}
\usepackage{bigstrut}
\usepackage{braket}
\usepackage{color}
\usepackage{natbib}
\usepackage{mathtools}
\usepackage{float}
\usepackage{xcolor,colortbl}
\usepackage{ragged2e}
\usepackage{physics}
\usepackage[justification=justified, format=plain]{subcaption}
\usepackage[justification=raggedright]{caption}

\newcommand{\be}{\begin{equation}}
\newcommand{\ee}{\end{equation}}
\newcommand{\ba}{\begin{eqnarray}}
\newcommand{\ea}{\end{eqnarray}}

\newtheorem{theorem}{Theorem}

\newtheorem{definition}{Definition}

\newtheorem{lemma}{Lemma}

\def\>{\rangle}
\def\<{\langle}

\begin{document}
	
\title{Detection of quantum imaginarity using moments and its interferometric realization}

\author{Sudip Chakrabarty}
\email{sudip27042000@gmail.com}
\affiliation{S. N. Bose National Centre for Basic Sciences, Block JD, Sector III, Salt Lake, Kolkata 700 106, India}

\author{Saheli Mukherjee}
\email{mukherjeesaheli95@gmail.com}
\affiliation{S. N. Bose National Centre for Basic Sciences, Block JD, Sector III, Salt Lake, Kolkata 700 106, India}

\author{Ananda G. Maity}
\email{anandamaity289@gmail.com}
\affiliation{School of Physical Sciences, Indian Institute of Technology Goa, Ponda 403401, Goa, India}

\author{Bivas Mallick}
\email{bivasqic@gmail.com}
\affiliation{S. N. Bose National Centre for Basic Sciences, Block JD, Sector III, Salt Lake, Kolkata 700 106, India}

\begin{abstract}
Complex numbers, intrinsic to the formulation of quantum theory, play a pivotal role in enabling advantages across a broad range of quantum information-processing tasks. Despite their fundamental importance, practical and scalable criteria for detecting quantum imaginarity remain relatively underexplored, particularly methods that enable its identification with reduced experimental overhead. In this work, we propose a realistic and experimentally feasible method to detect quantum imaginarity using moment-based approach. Our framework relies on experimentally accessible moments of the Kirkwood–Dirac quasiprobability distribution, enabling scalable detection in many-body and high-dimensional systems without requiring full state tomography. We then present an illustrative example to support our detection scheme. Finally, we present an interferometric scheme for measuring these moments, paving the way for experimental implementation of our detection protocol.

\end{abstract}
\maketitle

\section{Introduction}
Real numbers form the backbone of classical physics, offering a complete and consistent framework for the description of physical phenomena 
\cite{stueckelberg1960quantum,araki1980characterization,wootters2012entanglement,hardy2012limited,aleksandrova2013real,mckague2009simulating,varun2023real}. 
In contrast, the status of complex numbers in physical theories, particularly in quantum mechanics, has long been a subject of conceptual debate \cite{moretti2017quantum,renou2021quantum,chen2022ruling}. Although complex numbers are routinely employed in the mathematical formalism of quantum theory, their necessity is often assumed rather than critically examined. This has motivated sustained interest in determining whether complex numbers are intrinsically required for an accurate description of quantum phenomena, or whether quantum mechanics can be consistently and completely reformulated within a purely real number framework \cite{renou2021quantum}.

To gain deeper insight into the significance of complex numbers in quantum mechanics, Hickey and Gour introduced the resource theory of imaginarity in 2018 \cite{hickey2018quantifying}, which was subsequently further developed and extended by Wu et al. in 2021 \cite{wu2021operational}. Within this framework, quantum states that possess complex-valued matrix elements with respect to a fixed reference basis are regarded as resourceful, as they can be exploited to yield advantages in a wide range of quantum information processing tasks, including quantum state and channel discrimination tasks \cite{wu2021operational,wu2024resource}, quantum communication \cite{elliott2025strict}, multiparameter metrology \cite{miyazaki2022imaginarity}, non-local advantages \cite{wei2024nonlocal,datta2025efficient} and many others \cite{kedem2012using,zhu2021hiding,li2022brukner,sajjan2023imaginary,zhang2024broadcasting,zhang2024can,roden_2016}.

Despite the many promising applications of quantum imaginarity, a fundamental and practically relevant question is to reliably determine the presence of imaginarity in a quantum state, so that it can be effectively harnessed as a resource for achieving quantum advantage over classical analogues. Additionally, imaginarity is independent of other well-studied quantum resources such as coherence \cite{winter2016operational,streltsov2017colloquium,streltsov2017structure} and entanglement \cite{horodecki2009quantum}. A state can exhibit coherence while remaining entirely real in a given reference basis, and entanglement does not, in general, necessitate the presence of intrinsically complex phases. Consequently, standard resource witnesses developed for coherence \cite{napoli2016robustness,ren2017quantitative,ma2021detecting} or entanglement \cite{Guhne2009,chruscinski2014entanglement} are insufficient to certify imaginarity. Dedicated detection methods are required to identify and quantify this specifically complex feature.

Significant research has been carried out in this direction from different perspectives and exploiting different techniques \cite{xue2021quantification,wu2021resource,xu2024quantifying,fernandes2024unitary,zhang2025imaginarity,guo2025quantifying}. However, scalable and experimentally feasible detection methods are particularly important for many-body and high-dimensional systems, where full state tomography is impractical. In this context, we propose a detection criterion that enables the practical identification of imaginarity while significantly reducing experimental overhead. 

Here, we develop a strategy for detecting quantum imaginarity based on the moments of the Kirkwood–Dirac (KD) quasiprobability distribution. The KD distribution, a generally complex quasiprobability representation defined with respect to two noncommuting observables, provides a natural framework for capturing intrinsically complex features of quantum states \cite{kirkwood1933quantum,dirac1945analogy,arvidsson2024properties,PhysRevLett.127.190404,PhysRevA.109.012215,fan2024resource}. Owing to its inherently complex structure, it is particularly well suited for probing imaginarity. In this approach, we first apply a Y-twirling map to remove real coherence while retaining contributions from imaginary components, thereby isolating genuinely complex features of the state. We then examine the positivity of the resulting KD distribution: real states satisfy specific positivity constraints, and any violation implies imaginarity. However, to avoid full reconstruction of the KD distribution and thereby to ensure experimental efficiency and scalability, particularly in higher-dimensional systems, we formulate the detection criterion directly in terms of experimentally accessible moments of the KD distribution \cite{lkcb-nqqb}. Our approach provides an efficient and experimentally implementable route for identifying quantum imaginarity, particularly when compared to tomographic or witness-based methods \cite{fernandes2024unitary,zhang2025imaginarity}. We then present an illustrative example to demonstrate the effectiveness of our detection scheme. Next, to demonstrate the practical relevance of our moment-based approach for detecting quantum imaginarity, we present an interferometric scheme for realizing these moments. Our approach to detect quantum imaginarity relies on moments defined through simple functionals, which can be efficiently estimated in real experimental settings. Lastly, we develop a methodology that enables a direct detection method of quantum imaginarity through the visibility obtained in a Mach–Zehnder interferometer. Our proposal highlights the effectiveness of interferometric setups as a powerful and experimentally accessible tool for detecting imaginarity in arbitrary quantum states. 

The paper is organized as follows. In section \ref{s2}, we provide a brief overview and essential mathematical preliminaries of quantum imaginarity, as well as the moment criteria proposed in earlier works for entanglement detection. In section \ref{s3}, we present our framework for the detection of quantum imaginarity along with an explicit example. We then provide an interferometric proposal for the realization of these moments and a detection method of imaginarity from the visibility
obtained in an interference pattern in section \ref{s4}. Finally, in section \ref{s5}, we summarize our main findings along with some future perspectives.

\section{Preliminaries}\label{s2}
In this section, we introduce the basic concepts, definitions, and notations that are necessary for the subsequent discussion. Unless stated otherwise, throughout the paper, we adopt the standard notation and terminology widely used in quantum information theory.

\subsection{Quantum imaginarity }\label{s2A}

Let $\{ \ket{a_i} \}_{i=0}^{d-1}$ denote a fixed orthonormal reference basis of a $d$-dimensional Hilbert space. 
A density operator 
\(
\rho = \sum_{i,j} \rho_{ij}\, \ket{a_i}\bra{a_j}
\)
is said to be \emph{real} in this basis if and only if all its matrix elements are real, i.e.,
\begin{equation}
\rho_{ij} \in \mathbb{R} \qquad \forall\, i,j.
\end{equation}
In other words, $\rho$ is real if and only if it is invariant under transposition in the reference basis,
\begin{equation}
\rho = \rho^{T}.
\end{equation}

Alternatively, a quantum state $\rho$ is said to possess \emph{imaginarity} whenever at least one off-diagonal matrix element has a nonzero imaginary component,
\begin{equation}
\exists\, i \neq j \quad \text{such that} \quad \mathrm{Im}(\rho_{ij}) \neq 0.
\end{equation}

In this sense, imaginarity captures the intrinsically complex structure of quantum coherence~\cite{streltsov2017colloquium} in the chosen reference basis. Whereas standard coherence-measures quantify the overall magnitude of off-diagonal elements of a density matrix, imaginarity isolates the contribution arising specifically from the imaginary phases. Notably, the real and imaginary parts of coherence are not operationally equivalent: they can affect physical processes in qualitatively different ways. In particular, it has been shown that these components play distinct roles in various information processing protocols \cite{wu2021operational,wu2024resource,elliott2025strict,miyazaki2022imaginarity,wei2024nonlocal,datta2025efficient,kedem2012using,zhu2021hiding,li2022brukner,sajjan2023imaginary,zhang2024broadcasting,zhang2024can,roden_2016}, underscoring the importance of treating imaginarity as an independent feature of quantum states.

In operator-space language, any density matrix can be decomposed as
\begin{equation}
\rho = \rho_{\mathrm{real}} + \rho_{\mathrm{imag}},
\end{equation}
where
\begin{equation}
\rho_{\mathrm{real}} = \frac{\rho + \rho^{T}}{2},
\qquad
\rho_{\mathrm{imag}} = \frac{\rho - \rho^{T}}{2}.
\end{equation}
The real part ($\rho_{\mathrm{real}}$) is symmetric, while the imaginary part ($\rho_{\mathrm{imag}}$) is antisymmetric in the reference basis and encodes all imaginary coherences. In higher dimensions, the imaginary sector is spanned by the antisymmetric generators
\begin{equation}
Y_{pq} = i\left( \ket{a_p}\bra{a_q} - \ket{a_q}\bra{a_p} \right),
\qquad p<q, \label{eq:antisymmetric_generators}
\end{equation}
which form a subset of the generators of $\mathfrak{su}(d)$.

In the resource theory of quantum imaginarity~\cite{hickey2018quantifying, wu2021operational}, the collection of the \emph{real} states forms the set of \emph{free states}. It is instructive to define the \emph{free operations} corresponding to physical transformations of quantum systems, under which imaginarity cannot be generated or increased. Usually, a quantum operation $\Lambda$ is described by Kraus operators $\{K_\mu\}$ satisfying $\sum_\mu K_\mu^\dagger K_\mu = \mathbb{I}$, with
\(
\Lambda[\rho] = \sum_\mu K_\mu \rho K_\mu^\dagger.
\)
Considering imaginarity as the resource, the \emph{free operations} are identified as those admitting a Kraus representation whose matrix elements are \emph{real} in the reference basis $\{ \ket{a_i}\}$, i.e.,
\begin{equation}
\bra{a_i} K_\mu \ket{a_j} \in \mathbb{R}
\qquad \forall\, i,j,\mu.
\end{equation}

A natural quantitative measure of imaginarity is given by the $\ell_1$-norm of the imaginary part of the density matrix~\cite{Chen2023}:
\begin{equation}
\mathcal{M}_{\ell_1}(\rho)
=
\sum_{i\neq j} \, \big| \mathrm{Im}(\rho_{ij}) \big|.
\label{eq:l1_imaginarity}
\end{equation}

This measure satisfies:
\begin{enumerate}
\item $\mathcal{M}_{\ell_1}(\rho)=0$, if $\rho $ is a \emph{free} (\emph{real}) state.
\item $\mathcal{M}_{\ell_1} (\Lambda_F(\rho)) \leq \mathcal{M}_{\ell_1} (\rho)$, if $\Lambda_F$ is a \emph{free (real) operation}.
\end{enumerate}

\subsection{Kirkwood-Dirac quasiprobability distribution}\label{s2B}

We consider a Hilbert space of dimension $d$, on which all relevant operators are defined. 
Let $\{\ket{a_i}\}$ and $\{\ket{b_k}\}$ be two orthonormal bases sets associated with the spectral resolutions of the observables
\[
A=\sum_i a_i\,\ket{a_i}\bra{a_i}, 
\qquad 
B=\sum_k b_k\,\ket{b_k}\bra{b_k}.
\]

Given a density operator $\rho$, the Kirkwood-Dirac (KD) quasiprobability distribution corresponding to these two bases is defined as:
\begin{equation}
Q_{ik}(\rho)
\equiv \langle b_k|a_i\rangle\,\bra{a_i}\rho\ket{b_k}
= \mathrm{Tr}\!\left(\Pi^b_k \Pi^a_i \rho\right),
\label{Eq:KD}
\end{equation}
with $\Pi^a_i=\ket{a_i}\bra{a_i}$ and $\Pi^b_k=\ket{b_k}\bra{b_k}$.

The quantities $Q_{ik}(\rho)$ need not be real or positive and therefore, in general, cannot be interpreted as an ordinary probability distribution.
Despite this, they satisfy basic consistency relations:
\begin{align}
\sum_{i,k} Q_{ik} &= 1, \nonumber\\
\sum_k Q_{ik} &= \mathrm{Tr}(\Pi^a_i \rho), \nonumber\\
\sum_i Q_{ik} &= \mathrm{Tr}(\Pi^b_k \rho),
\end{align}
ensuring that the marginals reproduce the Born rule probabilities for measurements in the $\{\ket{a_i}\}$ and $\{\ket{b_k}\}$ bases \cite{Kolmogorov51}.

When the two bases coincide (i.e., $\ket{a_i}=\ket{b_k}$), the above expression reduces to
\[
Q_{ik}(\rho)=\mathrm{Tr}(\Pi^a_i \rho)\,\delta_{ik},
\]
which is an ordinary probability distribution. 
For more general choices of bases and states, however, the KD elements may become negative or even complex.

Above KD framework also extends straightforwardly to a sequence of observables \cite{Yunger18,gonzalez2019out,Razieh19,ArvidssonShukur2020}. 
For observables
\[
A^{(r)}=\sum_i a^{(r)}_i\,\Pi^{a^{(r)}}_i,
\qquad r=1,\ldots,l,
\]
one defines the extended KD distribution as
\begin{equation}
Q^\star_{i_1,\ldots,i_l}(\rho)
=
\mathrm{Tr}\!\left(
\Pi^{a^{(l)}}_{i_l}\cdots
\Pi^{a^{(1)}}_{i_1}\,\rho
\right).
\label{KDExt}
\end{equation}
One may note that the usual two-observable case can be recovered as a special case by setting $l=2$. 
The relevance of the extended KD distribution to our analysis will be discussed in Sec.~\ref{s3}.

An important aspect of the extended KD distribution is that it encodes complete information about the quantum state. 
In fact, provided the relevant overlaps are nonzero, the density operator can be reconstructed from the extended KD coefficients as
\begin{equation}
\rho
=
\sum_{i_1,\ldots,i_l}
\frac{
\ket{a^{(1)}_{i_1}}\bra{a^{(l)}_{i_l}}
}{
\langle a^{(l)}_{i_l}|a^{(1)}_{i_1}\rangle
}
\,Q^\star_{i_1,\ldots,i_l}(\rho).
\label{KDDecom}
\end{equation}

KD quasiprobabilities have found applications in quantum information theory, quantum foundations, and thermodynamics, where they are used to analyze nonclassical features and operational constraints 
\cite{Dressel15,Yunger18,gonzalez2019out,Razieh19,ArvidssonShukur2020,kunjwal2019anomalous,pusey2014anomalous,dressel2011experimental,lostaglio2020certifying,Upadhyaya24}. 
For a detailed review, see Ref.~\cite{arvidsson2024properties}.

\subsection{Partial transposition moments} \label{s2C}
Another important concept that we are going to introduce in this subsection is the \emph{moments} which we employ as a tool for detecting imaginarity. In particular, our approach is inspired by the moments of the partially transposed density matrix, commonly referred to as \emph{PT moments}. These quantities were originally introduced by \emph{Calabrese et al.} in the study of quantum correlations in many-body and field-theoretic systems \cite{PhysRevLett.109.130502}. These moments provide indirect access to the spectrum of the partially transposed state $\rho^{T_B}_{AB}$, whose eigenvalues are otherwise directly inaccessible due to the unphysical nature of the partial transposition map. The PT moments are defined as
\begin{equation}
p_n := \mathrm{Tr}\!\left[(\rho^{T_B}_{AB})^n\right], \qquad n \geq 1,
\label{PTmoments}
\end{equation}
 
By construction, $p_1 = 1$, while $p_2$ corresponds to the purity of $\rho^{T_B}_{AB}$. Recently, in \cite{PhysRevLett.125.200501}, authors proposed an entanglement detection criterion that relies solely on the first three PT moments. In particular, they proved that all PPT states satisfy
\begin{equation}
p_2^2 \leq p_3.
\end{equation}
Violation of this inequality implies that the underlying state is entangled, which is known as the $p_3$-PPT criterion. 

Further, higher-order moments have also been utilized to formulate criteria that capture finer spectral features and thereby enable more sensitive detection of entanglement \cite{Neven2021,yu2021optimal}. In particular, higher-order PT moments ($n \geq 4$) can be systematically exploited using Hankel matrices constructed from the moment sequence \cite{yu2021optimal}. Given the vector $\mathbf{p} = (p_1,p_2,\ldots,p_{2m+1})$, the corresponding $(m+1)\times(m+1)$ Hankel matrix is defined as
\begin{equation}
[H_m(\mathbf{p})]_{jk} := p_{j+k+1}, \quad j,k = 0,\ldots,m.
\label{Hankelmatrices}
\end{equation}
For instance,
\begin{equation}
H_1 =
\begin{pmatrix}
p_1 & p_2 \\
p_2 & p_3
\end{pmatrix},
\qquad
H_2 =
\begin{pmatrix}
p_1 & p_2 & p_3 \\
p_2 & p_3 & p_4 \\
p_3 & p_4 & p_5
\end{pmatrix}.
\end{equation}
A necessary condition for separability is then given by
\begin{equation}
\det[H_m(\mathbf{p})] \geq 0,
\label{Hankelmatrixcondition}
\end{equation}
for all $m$.

An important practical advantage of PT moments is that they can be estimated experimentally using shadow tomography and related techniques \cite{aaronson2018shadow,huang2020predicting,cieslinski2024analysing,mallick2024assessing,mallick2025efficient,mukherjee2025detecting,mallick2025higher,mallick2026detection}. These methods avoid full state tomography and require only a polylogarithmic number of state copies, making moment-based criteria particularly well-suited for high-dimensional and many-body systems. Moreover, since these criteria are state-independent, they are applicable even when little prior information about the system is available.

As discussed earlier, these moments can be realized in practical experimental settings. One possible approach is to implement them using an optical interferometric setup, which we will discuss in the latter part of this manuscript. In the next subsection, we initially review the basics of a Mach–Zehnder interferometer.

\subsection{Overview of a Mach–Zehnder interferometer}
The quality of an interference pattern is governed by the visibility and phase shift. Here, we provide a formal introduction to these quantities for a Mach-Zehnder interferometer.

Consider a Mach-Zehnder interferometer where the arms of the interferometer represent the path degrees of freedom. This is spanned by a two-dimensional Hilbert space with the corresponding basis denoted by $\{\ket{0}, \ket{1}\}$. The path elements of the interferometer are represented in this basis, whereas the quantum state whose imaginarity is to be detected constitutes the internal degrees of freedom. The composite system is thus represented by a tensor product of the path and internal degrees of freedom. The typical elements of an interferometer include two $50$-$50$ beam splitters, a phase shifter, a controlled unitary, two mirrors, and a measurement in the computational basis. The unitary gates representing these elements are given below.
\begin{itemize}
\item $50$-$50$ beam splitters $\Leftrightarrow$ Hadamard gate $H= \frac{1}{\sqrt{2}} \begin{pmatrix}
1 & 1 \\
1 & -1
\end{pmatrix}$. The corresponding unitary is $H \otimes \mathbb{I}$.
\item A phase shifter that shifts the phase of the path $\ket{0}$ by an angle $\theta$. The corresponding unitary is represented by $(e^{i \theta} \ket{0}\bra{0} \otimes \mathbb{I} + \ket{1}\bra{1} \otimes \mathbb{I})$.
\item A controlled unitary $U$ with the path (internal) degrees of freedom acting as the control (target). ($\ket{0}\bra{0} \otimes \mathbb{I} + \ket{1}\bra{1} \otimes U$) is the unitary.
\item Mirrors are represented by $\sigma_x = \begin{pmatrix}
0 & 1 \\
1 & 0
\end{pmatrix}$.
\end{itemize}
The overall evolution in a interferometer is given by the unitary \cite{PhysRevLett.85.2845, Kanjilal_2023}
\begin{equation}
\begin{split}
    U_{total} =& (H \otimes \mathbb{I})(\sigma_{x} \otimes \mathbb{I})(\ket{0}\bra{0} \otimes \mathbb{I} + \ket{1}\bra{1} \otimes U) \\ & (e^{i \theta} \ket{0}\bra{0} \otimes \mathbb{I} + \ket{1}\bra{1} \otimes \mathbb{I})(H \otimes \mathbb{I}).
    \end{split}
\end{equation}
If the initial state is $\rho_i=\ket{0}\bra{0}\otimes \rho$, the final state after evolving through the interferometer is $\rho_f=U \rho_i U^{\dagger}$. After a detailed calculation, we get
\begin{equation} \label{finalstate}
\begin{split}
    \rho_f=&\frac{1}{2}(\ket{+}\bra{+}\otimes U \rho U^{\dagger}+e^{-i\theta}\ket{+}\bra{-}\otimes U\rho \\ & +e^{i\theta}\ket{-}\bra{+}\otimes \rho U^{\dagger}+\ket{-}\bra{-}\otimes\rho).
    \end{split}
    \end{equation}
    The intensity of the interference along the path $\ket{0}$ is 
    \begin{equation*}
        I=\Tr[(\ket{0}\bra{0}\otimes \mathbb{I})\rho_f]
    \end{equation*}
    Simplifying the above expression, we obtain
    \begin{equation} \label{intensity}
        I=\frac{1}{2}(1+Re[\Tr(U \rho) e^{-i \theta}]),
    \end{equation}
    where $\Tr(U \rho)$ is in general, a complex quantity. Expressing $\Tr(U \rho)$ as $|\Tr(U \rho)|e^{i \chi}$, where $\chi=arg[\Tr(U \rho)]$, Eq.~\eqref{intensity} simplifies to
    \begin{equation} \label{intensity1}
        I=\frac{1}{2}[1+|\Tr(U \rho)| \cos{(\chi-\theta})]
    \end{equation}
    The visibility of an interference pattern is given by 
    \begin{equation} \label{visibility}
        V=\frac{\underset{\theta}{\max}\, I-\underset{\theta}{\min}\, I}
{\underset{\theta}{\max}\, I+\underset{\theta}{\min}\, I}
  \end{equation}
   After calculating the maximum and minimum intensities from Eq.~\eqref{intensity}, we get  
  \begin{equation} \label{visibility1}
      V=|\Tr(U \rho)|.
  \end{equation}
  Eq.~\eqref{visibility1} suggests that by choosing a suitable unitary, the desired visibility can be obtained. Figure~\ref{fig:inteferometer} represents a schematic diagram of the setup.

\begin{figure}[t]
\centering

\includegraphics[width=0.9\linewidth]{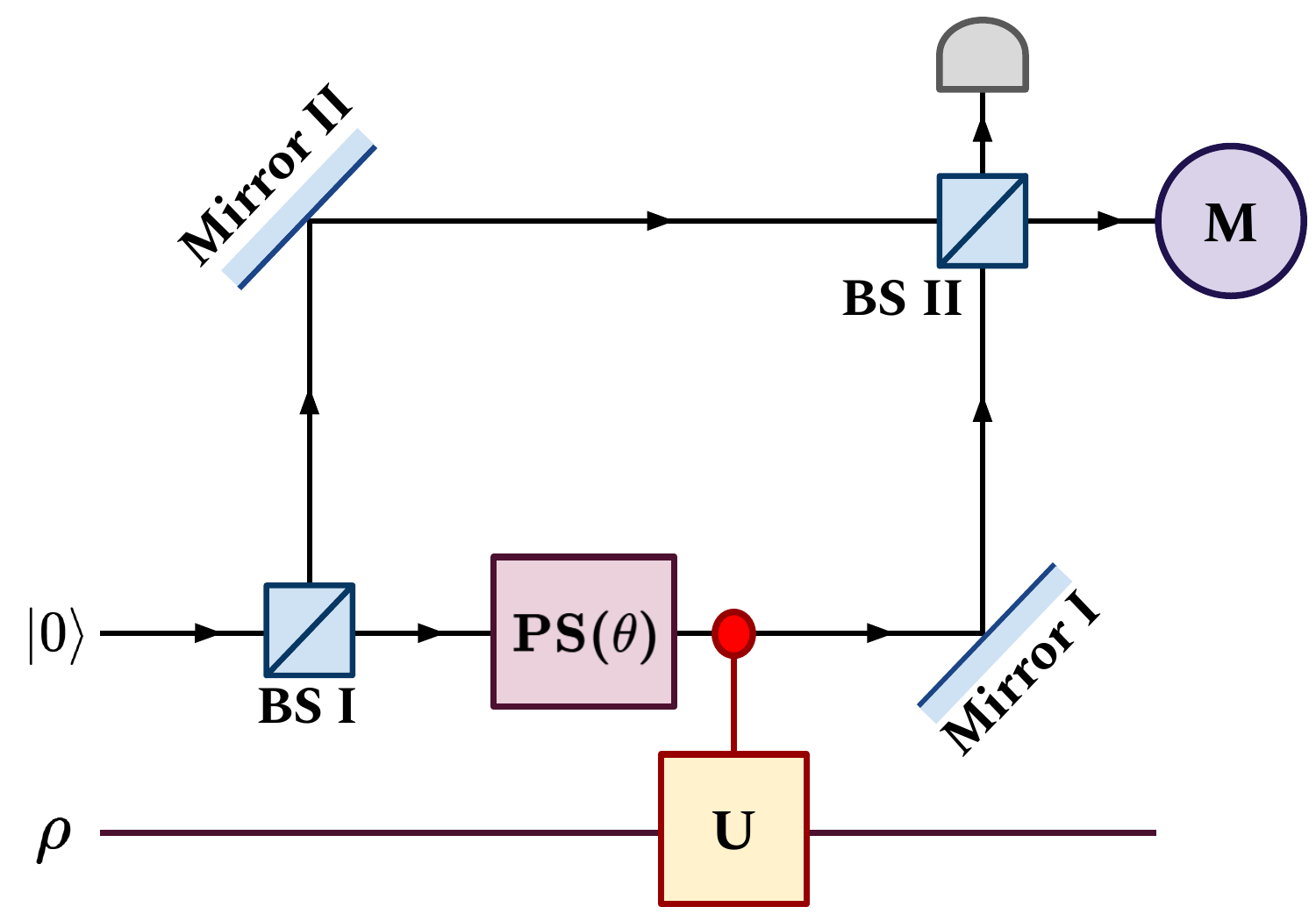}
\caption{ 
\justifying 
\small{
Schematic representation of the Mach–Zehnder interferometric setup used to estimate the visibility $V$. The path degree of freedom of the interferometer acts as the control qubit, while the internal system carrying the state $\rho$ serves as the target. The interferometer consists of two $50$-$50$ beam splitters (BS I and BS II), two mirrors, a phase shifter (PS) introducing a phase $\theta$ in the arm along $\ket{0}$, and a controlled unitary (U). A measurement (M) in the computational basis at the output port along the path $\ket{0}$ gives the visibility $V$ along path $\ket{0}$.
}}
\label{fig:inteferometer}
\end{figure}

Visibility defines the contrast of an interference pattern. For an ideal interference pattern, the phase $\theta$ can be tuned such that $\underset{\theta}{\max} I=1$, corresponding to constructive interference, and $\underset{\theta}{\min} I=0$, corresponding to destructive interference. This produces perfect contrast between the bright and dark fringes, yielding $V=1$ (see Eq.~\eqref{visibility}). In contrast, if the pattern is so degraded that $\underset{\theta}{\max} I=\underset{\theta}{\min} I$, then $V=0$, indicating the absence of interference and complete noise in the system. Any nonzero visibility reflects the persistence of wave-like behavior and therefore signals nonclassicality.

Having established the necessary prerequisites, we present our moment-based detection scheme in the following section.

\section{Detection of imaginarity using moments} \label{s3}

In this section, we establish a framework for detecting quantum imaginarity using moment-based approach. To this end, we first review the notion of moments of the extended KD distribution, which was introduced recently in Ref.~\cite{lkcb-nqqb}. Here we consider two orthonormal bases $\{\ket{a_i}\}$ and $\{\ket{b_k}\}$, where $\{\ket{b_k}\}$ is mutually unbiased basis with respect to $\{\ket{a_i}\}$. In this case, the elements of the extended KD distribution for the quantum state $\rho$ takes the form:
\begin{equation}
   Q^{\star}_{i,j,k} (\rho) = \langle a_j | b_k \rangle \langle b_k | a_i \rangle \langle a_i | \rho | a_j \rangle \label{eq:extended_KD_definition}.
\end{equation} 

Now to detect quantum imaginarity, we proceed in three steps. First, we introduce a symmetry-based twirling operation, referred here as the $Y$-twirl, which removes all real coherence of a quantum state while preserving its imaginary components. This operation effectively isolates the genuinely nonclassical contribution associated with imaginarity. Next, we exploit the fact that the presence of quantum coherence in a suitably chosen reference basis is equivalent to the nonpositivity of the extended KD quasiprobability distribution (after the application of $Y$-twirling operation). This establishes a direct operational link between imaginarity and KD nonpositivity. Finally, building on recent moment-based approaches, we formulate experimentally accessible criteria in terms of measurable moments of the KD distribution, enabling scalable and efficient detection without requiring full quasiprobability reconstruction.

\subsection{Removal of real coherences by the $Y$-twirl map}
In this subsection, we consider a quantum operation that eliminates the real part of the coherence while preserving its imaginary component, refereed to as the $Y$-twirl operation.
\begin{lemma}
\label{thm:Ytwirl}
Let $\mathcal{H}$ be a $d$-dimensional Hilbert space with orthonormal basis
$\{\ket{n}\}_{n=0}^{d-1}$.  
For each pair of distinct indices $p<q$, define the antisymmetric operator
\begin{align}
\mathcal{Y}_{pq} := i\big(\ket{p}\bra{q} - \ket{q}\bra{p}\big), \label{eq:Ytwirl}
\end{align}
which acts non-trivially only on the two-dimensional subspace spanned by $\{\ket{p},\ket{q}\}$.
Then, define the linear Y-twirl map $\mathcal{E}:\mathcal{B}(\mathcal{H}) \to \mathcal{B}(\mathcal{H})$ by
\begin{equation}
\mathcal{E}(\rho)
= \frac{1}{d}\left(\rho + \sum_{p<q} \mathcal{Y}_{pq}\,\rho\,\mathcal{Y}_{pq}\right).
\label{eq:Ytwirl-map}
\end{equation}
For any density operator $\rho$, with matrix elements $\rho_{mn}=\bra{m}\rho\ket{n}$,
the output state $\rho'=\mathcal{E}(\rho)$ satisfies:
\begin{align}
\rho'_{mm} &= \frac{1}{d}\
&&\text{for all diagonal entries,}
\label{eq:Ytwirl-diagonal}\\[4pt]
\rho'_{mn} &= \frac{2}{d}\, i\, \operatorname{Im}(\rho_{mn})
&&\text{for all } m\neq n.
\label{eq:Ytwirl-offdiag}
\end{align}
That is, the $Y$-twirl map completely removes the real parts of all off-diagonal elements,
keeping only the imaginary components up to a factor $2/d$, while forcing all diagonal entries to be equal.
\end{lemma}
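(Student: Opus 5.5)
The plan is a direct computation: write $\rho=\sum_{m,n}\rho_{mn}\ket{m}\bra{n}$, evaluate each conjugated term $\mathcal{Y}_{pq}\rho\,\mathcal{Y}_{pq}$ explicitly, and then sum over all pairs $p<q$. Since $\mathcal{Y}_{pq}$ only touches $\ket{p},\ket{q}$, each term should have a small, explicit form. Expanding
\begin{equation*}
\mathcal{Y}_{pq}\rho\,\mathcal{Y}_{pq} = -\big(\ket{p}\bra{q}-\ket{q}\bra{p}\big)\rho\big(\ket{p}\bra{q}-\ket{q}\bra{p}\big)
\end{equation*}
and using $\bra{q}\rho\ket{p}=\rho_{qp}$ etc., I expect
\begin{equation*}
\mathcal{Y}_{pq}\rho\,\mathcal{Y}_{pq} = \rho_{qq}\ket{p}\bra{p}+\rho_{pp}\ket{q}\bra{q}-\rho_{qp}\ket{p}\bra{q}-\rho_{pq}\ket{q}\bra{p}.
\end{equation*}
In words, the conjugation swaps the two populations and replaces the coherence $\rho_{pq}$ by $-\rho_{qp}$. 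Everything outside the $\{\ket{p},\ket{q}\}$ block is annihilated.

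Next I will read off matrix elements of $\rho'=\mathcal{E}(\rho)$. For a diagonal entry $(m,m)$, the term $\rho$ contributes $\rho_{mm}$. Each pair containing $m$, i.e.\ $\{m,q\}$ with $q\neq m$, contributes the swapped population $\rho_{qq}$. Pairs not containing $m$ contribute nothing. Hence
\begin{equation*}
d\,\rho'_{mm}=\rho_{mm}+\sum_{q\neq m}\rho_{qq}=\mathrm{Tr}\rho=1,
\end{equation*}
which gives Eq.~\eqref{eq:Ytwirl-diagonal}. For an off-diagonal entry $(m,n)$ with $m\neq n$, only the single pair $\{m,n\}$ contributes, giving $-\rho_{nm}$ in that slot regardless of ordering. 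Therefore
\begin{equation*}
d\,\rho'_{mn}=\rho_{mn}-\rho_{nm}=\rho_{mn}-\overline{\rho_{mn}}=2i\,\mathrm{Im}(\rho_{mn}),
\end{equation*}
using hermiticity of $\rho$. This is Eq.~\eqref{eq:Ytwirl-offdiag}.

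Finally, as a consistency check that $\mathcal{E}$ is a genuine channel (so that $\rho'$ is a state), I will note that $\mathcal{Y}_{pq}^\dagger=\mathcal{Y}_{pq}$ and $\mathcal{Y}_{pq}^2=\ket{p}\bra{p}+\ket{q}\bra{q}$. Each index lies in $d-1$ pairs, so
\begin{equation*}
\frac{1}{d}\Big(\mathbb{I}+\sum_{p<q}\mathcal{Y}_{pq}^2\Big)=\frac{1}{d}\big(\mathbb{I}+(d-1)\mathbb{I}\big)=\mathbb{I}.
\end{equation*}
Thus the operators $\mathbb{I}/\sqrt{d}$ and $\mathcal{Y}_{pq}/\sqrt{d}$ form a valid Kraus representation, and $\mathcal{E}$ is CPTP.

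There is no real obstacle here. The only step needing care is the bookkeeping of which pairs $p<q$ contribute to a given entry, together with the sign: the off-diagonal coefficient comes out as $-\rho_{nm}$, not $-\rho_{mn}$. That sign is exactly what kills the real part and doubles the imaginary part, so I will verify it once for both orderings $m<n$ and $m>n$.
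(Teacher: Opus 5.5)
Your proof is correct and takes essentially the same route as the paper. Both are direct computations in which the populations are swapped onto the diagonal and only the pair $\{m,n\}$ contributes to an off-diagonal entry, with coefficient $-\rho_{nm}$. The paper applies $\mathcal{E}$ to the matrix units $\ket{m}\bra{n}$ via linearity, while you conjugate a general $\rho$ by each $\mathcal{Y}_{pq}$ and read off entries; this is only a different ordering of the same bookkeeping.

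Your Kraus-operator check $\frac{1}{d}\big(\mathbb{I}+\sum_{p<q}\mathcal{Y}_{pq}^2\big)=\mathbb{I}$ is not in the paper. It is what justifies calling $\rho'$ a state.
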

\begin{proof}
Let $\rho=\sum_{m,n}\rho_{mn}\ket{m}\bra{n}$ be an arbitrary density operator on a $d$-dimensional Hilbert space. Since the map $\mathcal E$ is linear, it is sufficient to determine its action on the matrix units $\ket{m}\bra{n}$.

\textbf{Diagonal terms:} Let $m=n$, then
\[
\mathcal E(\ket{m}\bra{m})
=
\frac{1}{d}\left(
\ket{m}\bra{m}
+
\sum_{p<q} \mathcal Y_{pq}\ket{m}\bra{m}\mathcal Y_{pq}
\right).
\]
If $m\notin\{p,q\}$, then $\mathcal Y_{pq}\ket{m}=0$ and the contribution vanishes. If $m\in\{p,q\}$, say $(p,q)=(m,r)$, then
\[
\mathcal Y_{mr}\ket{m}=-i\ket{r},
\qquad
\bra{m}\mathcal Y_{mr}=i\bra{r},
\]
and therefore
\[
\mathcal Y_{mr}\ket{m}\bra{m}\mathcal Y_{mr}
=
\ket{r}\bra{r}.
\]
Summing over all $r\neq m$ gives
\[
\sum_{p<q}\mathcal Y_{pq}\ket{m}\bra{m}\mathcal Y_{pq}
=
\sum_{r\neq m}\ket{r}\bra{r}.
\]
Hence
\begin{align}
\mathcal E(\ket{m}\bra{m})
&= \frac{1}{d} \left( \ket{m}\bra{m} + \sum_{r\neq m} \ket{r}\bra{r} \right) \nonumber \\
&= \frac{1}{d} \sum_{r =1}^d \ket{r}\bra{r}
\end{align}
which implies
\[
\rho'_{mm}=\frac{1}{d}.
\]

\textbf{Off-diagonal terms:} Now let $m\neq n$, then
\[
\mathcal E(\ket{m}\bra{n})
=
\frac{1}{d}\left(
\ket{m}\bra{n}
+
\sum_{p<q}\mathcal Y_{pq}\ket{m}\bra{n}\mathcal Y_{pq}
\right).
\]

If $\{p,q\}$ is disjoint from $\{m,n\}$, the contribution vanishes. If $(p,q)=(m,n)$, then
\[
\mathcal Y_{mn}\ket{m}=-i\ket{n},\qquad
\bra{n}\mathcal Y_{mn}=i\bra{m},
\]
and therefore
\[
\mathcal Y_{mn}\ket{m}\bra{n}\mathcal Y_{mn}
=
-\ket{n}\bra{m}.
\]

If exactly one index overlaps, say $(p,q)=(m,r)$ with $r\neq n$, then
\[
\mathcal Y_{mr}\ket{m}=-i\ket{r},\qquad
\bra{n}\mathcal Y_{mr}=0,
\]
so the contribution vanishes. The same holds when $(p,q)=(n,r)$ with $r\neq m$.

Thus the only nonzero contribution comes from $(p,q)=(m,n)$, yielding
\[
\sum_{p<q}\mathcal Y_{pq}\ket{m}\bra{n}\mathcal Y_{pq}
=
-\ket{n}\bra{m}.
\]

Hence
\[
\mathcal E(\ket{m}\bra{n})
=
\frac{1}{d}
\left(
\ket{m}\bra{n}
-
\ket{n}\bra{m}
\right).
\]

For the density matrix $\rho=\sum_{m,n}\rho_{mn}\ket{m}\bra{n}$,
we therefore obtain
\[
\rho'_{mn}
=
\frac{1}{d}(\rho_{mn}-\rho_{nm})
=
\frac{2}{d}\,i\,\mathrm{Im}(\rho_{mn}),
\qquad \forall \, m\neq n.
\]

This completes the proof.
\end{proof}

Above lemma demonstrates that the $Y$-twirl removes all real coherence components of a quantum state while retaining its imaginary components. This operation, therefore, singles out the intrinsically nonclassical contribution stemming from imaginarity. Below, we explicitly verify that the $Y$-twirl map $(\mathcal{E})$ reproduces the expected form for Hilbert spaces of dimension $d=2$ and $d=3$.

\subsubsection*{Demonstration for qubits and qutrits}

\textbf{Qubit case ( $d=2$ ) :} In this case, there is only one antisymmetric operator,
\[
\mathcal{Y}_{01} = i(\ket{0}\bra{1} - \ket{1}\bra{0}).
\]
Consider a general qubit density matrix
\[
\rho =
\begin{pmatrix}
a & b\\[2pt]
b^* & 1-a
\end{pmatrix},
\]
where $0\leq a\leq 1$ and $|b|^2 \leq a(1-a)$, the latter condition ensuring the positive semidefiniteness of $\rho$. After applying the $Y$ twirling map, we obtain
\[
\mathcal{E}(\rho)
= \tfrac{1}{2}\big(\rho + \mathcal{Y}_{01}\rho\,\mathcal{Y}_{01}\big)
= 
\begin{pmatrix}
\tfrac{1}{2} & i\,\Im(b)\\[3pt]
-\,i\,\Im(b) & \tfrac{1}{2}
\end{pmatrix}.
\]
This clearly implies that the diagonal elements are uniformly averaged to $1/2$, the real part of the off-diagonal term is eliminated, and only the imaginary component is retained. Hence, the $Y$ twirling map removes all real components of coherence while preserving the imaginary contribution.

\textbf{Qutrit case ( $d=3$ ) :} For $d=3$, the antisymmetric operators are
\begin{align*}
\mathcal{Y}_{01} = i(\ket{0}\bra{1} - \ket{1}\bra{0}), \\
\mathcal{Y}_{02} = i(\ket{0}\bra{2} - \ket{2}\bra{0}), \\
\mathcal{Y}_{12} = i(\ket{1}\bra{2} - \ket{2}\bra{1}).
\end{align*}

We consider a general qutrit density matrix
\[
\rho =
\begin{pmatrix}
a & b & c\\[2pt]
b^* & d & e\\[2pt]
c^* & e^* & f
\end{pmatrix},
\quad \text{where}~~ a,d,f \geq 0 \; \text{and }a+d+f=1.
\]
The conditions for $\rho$ to be a valid density matrix are
\begin{align}
a,d,f \ge 0, \qquad a+d+f = 1, \nonumber \\
|b|^2 \le ad, \qquad |c|^2 \le af, \qquad |e|^2 \le df, \nonumber\\
adf + 2\,\mathrm{Re}(bce^*) \ge a|e|^2 + d|c|^2 + f|b|^2. \label{positivity_rho}
\end{align}
Applying the $Y$ twirling map,
\[
\mathcal{E}(\rho)
= \tfrac{1}{3}\!\left(\rho
+ \mathcal{Y}_{01}\rho\mathcal{Y}_{01}
+ \mathcal{Y}_{02}\rho\mathcal{Y}_{02}
+ \mathcal{Y}_{12}\rho\mathcal{Y}_{12}\right)
\]
we get,
\[
\mathcal{E}(\rho)
=
\begin{pmatrix}
\tfrac{1}{3} & \tfrac{2i}{3}\Im(b) & \tfrac{2i}{3}\Im(c)\\[4pt]
-\tfrac{2i}{3}\Im(b) & \tfrac{1}{3} & \tfrac{2i}{3}\Im(e)\\[4pt]
-\tfrac{2i}{3}\Im(c) & -\tfrac{2i}{3}\Im(e) & \tfrac{1}{3}
\end{pmatrix}.
\]
Once again, the diagonal entries are uniform ($1/3$), the real parts of all off-diagonal elements vanish, and the imaginary parts are rescaled by the universal factor $2/d=2/3$.
This also matches with the general transformation law
\[
\rho'_{mm} = \frac{1}{d}, \qquad
\rho'_{mn} = \frac{2}{d} i\,\text{Im}(\rho_{mn}),
\]
for arbitrary dimension $d$.

Next, we exploit a correspondence between quantum coherence in a fixed reference basis and the nonpositivity of the associated extended KD distribution constructed using two mutually unbiased bases to develop a moment-based framework for detecting quantum imaginarity.

\subsection{Equivalence between quantum coherence and nonpositivity of the extended KD distribution}\label{s3A}

\begin{lemma} \cite{budiyono2023quantifying}
\label{thm:KD_coherence_equivalence}
Let $\{\ket{a_i}\}$ and $\{\ket{b_k}\}$ be two orthonormal bases of a
$d$-dimensional Hilbert space, with $\{\ket{b_k}\}$ mutually unbiased with
respect to $\{\ket{a_i}\}$. For any quantum state $\rho$, the extended KD distribution $\{ Q^{\star}_{j,k,l}(\rho) \}$ as defined in Eq.(\ref{eq:extended_KD_definition}), satisfies
\begin{equation}
\sum_{j,k,l} \big| Q^{\star}_{j,k,l}(\rho) \big| - 1
=
\mathcal{C}_{\ell_1}(\rho,\{\ket{a_j}\}),
\end{equation}
where $\mathcal{C}_{\ell_1}(\rho,\{\ket{a_j}\})$ denotes the $\ell_1$ norm of coherence of $\rho$ in the basis $\{\ket{a_j}\}$. Consequently, the extended KD distribution is nonpositive, i.e.,
\begin{equation}
\sum_{j,k,l} \big| Q^{\star}_{j,k,l}(\rho) \big| - 1 > 0,
\end{equation}
if and only if $\rho$ possesses nonzero coherence in the basis $\{\ket{a_j}\}$.
\end{lemma}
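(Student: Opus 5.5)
Since the $\{\ket{b_k}\}$ basis is mutually unbiased with respect to $\{\ket{a_i}\}$, every overlap satisfies $|\langle a_i|b_k\rangle| = 1/\sqrt{d}$. The plan is to compute the modulus of each element of the extended KD distribution in Eq.~(\ref{eq:extended_KD_definition}) in closed form and then sum. Writing the triple index as $(i,j,k)$ consistently with Eq.~(\ref{eq:extended_KD_definition}), we have
\begin{equation}
\big|Q^{\star}_{i,j,k}(\rho)\big| = |\langle a_j|b_k\rangle|\,|\langle b_k|a_i\rangle|\,|\rho_{ij}| = \frac{1}{d}\,|\rho_{ij}|,
\end{equation}
where $\rho_{ij}=\langle a_i|\rho|a_j\rangle$. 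The key point is that the phases of the overlaps drop out entirely once absolute values are taken, so the dependence on $k$ disappears.

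Next I sum over the free index $k$, which ranges over $d$ values and cancels the prefactor $1/d$. This gives
\begin{equation}
\sum_{i,j,k}\big|Q^{\star}_{i,j,k}(\rho)\big| = \sum_{i,j}|\rho_{ij}| = \sum_i \rho_{ii} + \sum_{i\neq j}|\rho_{ij}| .
\end{equation}
The diagonal entries are nonnegative and sum to $\mathrm{Tr}\rho = 1$. The off-diagonal sum is, by definition, $\mathcal{C}_{\ell_1}(\rho,\{\ket{a_j}\})$. Subtracting $1$ therefore yields the claimed identity.

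For the ``consequently'' part, first recall that $\sum_{i,j,k} Q^{\star}_{i,j,k} = 1$ (the analogue of the normalisation in Sec.~\ref{s2B}, which follows from completeness of both bases). Hence $\sum|Q^\star| \ge 1$, with equality if and only if every $Q^\star_{i,j,k}$ is real and nonnegative. This justifies reading $\sum|Q^\star|-1>0$ as nonpositivity of the distribution. By the identity just proved, this quantity vanishes exactly when all off-diagonal $\rho_{ij}$ vanish, i.e.\ exactly when $\rho$ is incoherent in $\{\ket{a_j}\}$.

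No step is a genuine obstacle; the argument is a direct computation. The only care needed is in matching the index labels of the lemma ($j,k,l$) with those of Eq.~(\ref{eq:extended_KD_definition}) and in confirming that the mutually-unbiased condition is used only through the moduli $1/\sqrt{d}$. In particular, the argument does not need any phase structure of the overlaps.
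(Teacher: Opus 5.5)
Your proposal is correct and matches the paper's proof essentially step for step: the mutually unbiased overlaps give $|Q^{\star}_{i,j,k}(\rho)| = |\rho_{ij}|/d$, summing over the free index $k$ cancels the $1/d$, and the unit trace absorbs the $-1$. You also add the observation that $\sum_{i,j,k} Q^{\star}_{i,j,k} = 1$ forces $\sum_{i,j,k}|Q^{\star}_{i,j,k}| \ge 1$, with equality exactly when every entry is real and nonnegative; the paper leaves this implicit, and it is what links $\sum|Q^{\star}|-1$ to the entrywise notion of positivity used in Theorem~\ref{theorem1}.
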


\begin{proof}
Recall that the $\ell_1$ norm of coherence of a state $\rho$ with respect to the reference basis $\{\ket{a_j}\}$ is defined as~\cite{baumgratz2014quantifying}:
\begin{equation}
\mathcal{C}_{\ell_1}(\rho,\{\ket{a_j}\})
= \sum_{j\neq k} \big| \langle a_j | \rho | a_k \rangle \big|.
\end{equation}
Using the normalization condition $\sum_j \langle a_j | \rho | a_j \rangle = 1$, this expression can be rewritten as
\begin{equation}
\mathcal{C}_{\ell_1}(\rho,\{\ket{a_j}\})
=
\sum_{j,k} \big| \langle a_j | \rho | a_k \rangle \big| - 1.
\end{equation}

Since the two bases are mutually unbiased, their overlaps satisfy
\begin{equation}
\big| \langle a_j | b_l \rangle \big| = \frac{1}{\sqrt{d}}
\quad \forall j,l,
\end{equation}
which implies
\begin{equation}
\big| \langle a_k | b_l \rangle \langle b_l | a_j \rangle \big| = \frac{1}{d} \quad \forall j,k,l.
\end{equation}
Therefore,
\begin{align}
\sum_{j,k,l}
\big| Q^{\star}_{j,k,l}(\rho) \big| -1 
&=
\sum_{j,k,l} \big| \langle a_k | b_l \rangle \langle b_l | a_j \rangle \langle a_j | \rho | a_k \rangle \big| -1 \nonumber \\
&= \sum_{j,k} \big| \langle a_j | \rho | a_k \rangle \big| - 1 \nonumber\\
& = \mathcal{C}_{\ell_1}(\rho,\{\ket{a_j}\}).
\end{align}
This establishes the desired equivalence. The final statement follows immediately: $\mathcal{C}_{\ell_1}(\rho,\{\ket{a_j}\})$ vanishes if and only if the state is diagonal in the basis $\{\ket{a_j}\}$, which occurs precisely when the extended KD distribution is positive in the above sense.
\end{proof}

Above result shows that extended KD nonpositivity provides a faithful and quantitative witness of quantum coherence. In particular, the excess absolute weight of the extended KD distribution above unity is exactly equal to the $\ell_1$ norm of coherence, establishing a direct one-to-one correspondence between coherence and the extended KD nonpositivity in the mutually unbiased bases. Combining this correspondence with the 
$Y$-twirl operation, we obtain a direct link between coherence arising solely from imaginarity and the nonpositivity of the extended KD distribution. In the next subsection, we now move towards a moment-based framework for detecting quantum imaginarity, building on moment-based detection techniques for the extended KD distribution.

\subsection{KD moment-based detection of quantum imaginarity}

To demonstrate our moment-based framework for detecting quantum imaginarity, we first define the moments of the extended KD distribution.

\begin{definition}
 Let $\mathcal{H}$ be a $d$-dimensional Hilbert space and ${\rho}$ be an arbitrary density operator defined on $\mathcal{H}$. Consider $\{ Q^{\star}_{ijk}(\rho) \}$ to be the extended KD distribution for the quantum state $\rho$ with respect to bases $\{\ket{a_i}\}$ and $\{\ket{b_k}\}$, where $\{\ket{b_k}\}$ is mutually unbiased with respect to $\{\ket{a_i}\}$. We define the $n$-th order extended KD moments $(r_n) $ as:

\begin{equation} 
 r_n := \sum_{i,j,k}
 ( Q^{\star}_{ijk} (\rho) )^n\label{extendedkdmoments}
\end{equation}
where $n$ is an integer.
\end{definition}

Note that, for a fixed reference basis $\{\ket{a_i}\}$ in a $d$-dimensional Hilbert space, one can always choose a mutually unbiased basis $\{\ket{b_k}\}$ with appropriate phase conventions such that the corresponding extended KD distribution becomes real.

Based on the above definition, we now formulate a refined criterion for detecting KD nonpositivity using moments of the extended KD distribution.

\begin{theorem} \label{theorem1}
If the extended KD distribution $\{ Q^{\star}_{ijk} (\rho)\}$ for a quantum state $\rho$ with respect to bases $\{\ket{a_i}\}$ and $\{\ket{b_k}\}$ is positive, where $\{\ket{b_k}\}$ is mutually unbiased basis with respect to $\{\ket{a_i}\}$, then 
\begin{equation}
\det[H_{m}(\mathbf{r})] \ge 0 . \label{Hankelmatrix_coherence}
\end{equation}
Here, $[H_{m}(\mathbf{r})]_{pq} = r_{p+q+1}$ for $p,q \in \{0,1,\dots,m\}$, $m \in \mathbb{N}$ and $ \mathbf{r}= (r_1, r_2, \dots, r_{2m+1})$ are the set of corresponding extended KD moments defined in Eq.~\eqref{extendedkdmoments}.
\end{theorem}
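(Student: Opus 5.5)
The plan is to recognise the moments $r_n$ as power moments of a positive measure. Then $H_m(\mathbf r)$ is a Gram matrix, and the determinant bound follows at once, as in the Hankel criterion of Eq.~\eqref{Hankelmatrixcondition} for PT moments. Throughout, "positive" means that every entry $Q^\star_{ijk}(\rho)$ is real and nonnegative. Write $x_\alpha := Q^\star_{ijk}(\rho)\ge 0$ for the multi-index $\alpha=(i,j,k)$, which runs over a finite set of $d^3$ elements. Then $r_n=\sum_\alpha x_\alpha^n$ for all integers $n\ge1$.

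\textbf{Step 1 (positive measure).} Define the finite measure $\mu=\sum_\alpha x_\alpha\,\delta_{x_\alpha}$ on $[0,\infty)$. Its weights are $x_\alpha\ge 0$, so $\mu$ is a genuine positive measure. Its power moments are $\int t^{s}\,d\mu(t)=\sum_\alpha x_\alpha^{s+1}=r_{s+1}$. Consequently $[H_m(\mathbf r)]_{pq}=r_{p+q+1}=\int t^{p+q}\,d\mu(t)$.

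\textbf{Step 2 (positive semidefiniteness).} For any real vector $c=(c_0,\dots,c_m)$,
\begin{equation*}
c^{T}H_m(\mathbf r)\,c=\sum_{p,q}c_pc_q\sum_\alpha x_\alpha\,x_\alpha^{p}x_\alpha^{q}=\sum_\alpha x_\alpha\Big(\sum_{p}c_p x_\alpha^{p}\Big)^2\ge 0,
\end{equation*}
because each $x_\alpha\ge0$. Equivalently, $H_m(\mathbf r)=V^{T}DV$, where $V_{\alpha p}=x_\alpha^{p}$ is a Vandermonde-type matrix and $D=\mathrm{diag}(x_\alpha)\ge0$. Hence $H_m(\mathbf r)$ is a real symmetric positive semidefinite matrix.

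\textbf{Step 3 (conclusion).} A real symmetric positive semidefinite matrix has only nonnegative eigenvalues, so its determinant is nonnegative. This gives $\det[H_m(\mathbf r)]\ge0$ for every $m\in\mathbb N$. The argument in fact shows more: every principal minor of $H_m(\mathbf r)$ is nonnegative.

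\textbf{Main obstacle.} The only delicate point is bookkeeping rather than a hard estimate. One must make sure the hypothesis supplies real, nonnegative $Q^\star_{ijk}$. Without it, the $x_\alpha^n$ could be complex or sign-alternating, and the weights of $\mu$ would no longer be positive. One must also index the Hankel entries so that the extra factor $x_\alpha$ appears as the weight; this is why the entries begin at $r_1$ rather than at a zeroth moment. I will state explicitly that Step 2 uses $x_\alpha\ge0$ in exactly one place, the sign of the weight. That observation is what later makes a violation $\det[H_m(\mathbf r)]<0$ certify KD nonpositivity, and hence, via Lemma~\ref{thm:KD_coherence_equivalence} applied to $\mathcal E(\rho)$ from Lemma~\ref{thm:Ytwirl}, imaginarity.
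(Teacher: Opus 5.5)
Your proposal is correct and follows essentially the same route as the paper. The paper places the nonnegative entries $\lambda_\ell$ on the diagonal of $D$, writes $H_m(\mathbf r)=\mathcal{V}_m D\,\mathcal{V}_m^{T}$ with a Vandermonde matrix, and shows the quadratic form equals $\sum_\ell \lambda_\ell z_\ell^2\ge 0$, which is exactly your Step~2; your measure $\mu$ is just a repackaging of that same decomposition.
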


\begin{proof} 
If the extended KD distribution is positive, its elements form a normalized collection of nonnegative real numbers $\{\lambda_\ell \ge 0\}$. Placing these elements on the diagonal yields a matrix $D$ that is positive semi-definite and normalized to unit trace. The extended KD moments arise from a positive distribution, implying that the associated Hankel matrices admit a Vandermonde decomposition of the form $H_m(\mathbf r)=\mathcal{V}_m D \mathcal{V}_m^T$, where $\mathcal{V}_m$ is the Vandermonde matrix with entries $(\mathcal{V}_m)_{j\ell}=\lambda_\ell^{\,j-1}$. It follows immediately that $H_m(\mathbf r)$ is positive semi-definite, and hence $\det[H_m(\mathbf r)] \ge 0$. The detailed proof of Theorem \ref{theorem1} is provided in Appendix \ref{appendixA}.
\end{proof}

Theorem~\ref{theorem1} asserts that the condition stated in Eq.~\eqref{Hankelmatrix_coherence} is necessary for the extended KD distribution to remain positive. As a result, any violation of this condition serves as a sufficient indicator of nonpositivity. For \(m > 1\), Theorem~\ref{theorem1} introduces a hierarchy of increasingly stronger criteria that can detect nonpositivity of the extended KD distribution with higher sensitivity.

We now aim to detect quantum imaginarity through the above established moment-based KD nonpositivity detection scheme, which is formally stated as the following theorem.

\begin{theorem}\label{theorem2}
Suppose that $\rho' = \mathcal{E}(\rho)$ is obtained by applying the $Y$-twirl operation, defined in Eq.~\eqref{eq:Ytwirl}, to the state $\rho$. Let $Q^{\star}(\rho')$ be the extended KD distribution associated with a quantum state $\rho'$, defined with respect to two mutually unbiased bases $\{\ket{a_i}\}$ and $\{\ket{b_k}\}$. If there exists an integer $m \in \mathbb{N}$ such that
\begin{equation}
 \det[H_{m}(\mathbf{r}{'})] < 0,
\label{moment_coherence}
\end{equation}
then the state $\rho$ possesses nonzero imaginarity in the basis $\{\ket{a_i}\}$. Here, $ \mathbf{r}{'}= (r'_1, r'_2, \dots, r'_{2m+1})$ and $r'_{n} = \sum_{i,j,k} \left[ Q^{*}_{ijk}\, (\rho') \right]^n$ denote the extended KD moments associated with the state $\rho'$.
\end{theorem}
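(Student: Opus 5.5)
The plan is to argue by contraposition, chaining Theorem~\ref{theorem1}, Lemma~\ref{thm:KD_coherence_equivalence} and Lemma~\ref{thm:Ytwirl}. We assume $\rho$ has zero imaginarity in $\{\ket{a_i}\}$, i.e.\ $\operatorname{Im}(\rho_{mn})=0$ for all $m\neq n$, and show that $\det[H_m(\mathbf r')]\ge 0$ for every $m\in\mathbb N$. Negating this conclusion then gives the statement.

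\textbf{Step 1.} By Lemma~\ref{thm:Ytwirl}, $\rho'=\mathcal E(\rho)$ has $\rho'_{mm}=1/d$ and $\rho'_{mn}=\tfrac{2}{d}\,i\,\operatorname{Im}(\rho_{mn})=0$ for $m\neq n$. Hence $\rho'=\mathbb I/d$, which is incoherent in $\{\ket{a_i}\}$, so $\mathcal C_{\ell_1}(\rho',\{\ket{a_j}\})=0$.

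\textbf{Step 2.} We show the extended KD distribution of $\rho'$ is positive. Lemma~\ref{thm:KD_coherence_equivalence} gives $\sum_{i,j,k}|Q^\star_{ijk}(\rho')|=1$. A direct computation is more informative: from Eq.~\eqref{eq:extended_KD_definition}, $Q^\star_{ijk}(\rho')=\langle a_j|b_k\rangle\langle b_k|a_i\rangle\,\delta_{ij}/d=|\langle a_i|b_k\rangle|^2\delta_{ij}/d$. By mutual unbiasedness this equals $\delta_{ij}/d^2$. Every entry is therefore a nonnegative real number and the entries sum to $1$, so the distribution is positive in the sense of Theorem~\ref{theorem1}. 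This holds independently of the phase convention chosen for $\{\ket{b_k}\}$.

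\textbf{Step 3.} Theorem~\ref{theorem1} now applies to $\rho'$ and gives $\det[H_m(\mathbf r')]\ge0$ for all $m$. This contradicts $\det[H_m(\mathbf r')]<0$, so $\rho$ must have nonzero imaginarity. As a sanity check, the moments can be computed explicitly: $r'_n=d\cdot d\cdot d^{-2n}=d^{2-2n}$. These are the moments of a point mass at $1/d^2$ with weight $1$. The resulting Hankel matrix $H_m$ has rank one and is positive semidefinite, with determinant $0$ for $m\ge1$.

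\textbf{Expected obstacle.} The step needing most care is Step 2. The equality $\sum|Q^\star|=1$ from Lemma~\ref{thm:KD_coherence_equivalence} alone does not literally say that each $Q^\star_{ijk}$ is real and nonnegative, since it also requires $\sum Q^\star=1$ with the right phases. I would sidestep this by using the explicit form $\rho'=\mathbb I/d$, as done above.
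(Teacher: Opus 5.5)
Your proof is correct. It uses the paper's three ingredients in contrapositive form, with one link replaced.

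The paper argues forward: $\det[H_m(\mathbf r')]<0$ gives nonpositivity of $Q^\star(\rho')$ by Theorem~\ref{theorem1}. Lemma~\ref{thm:KD_coherence_equivalence} turns this into nonzero $\ell_1$-coherence of $\rho'$, and Lemma~\ref{thm:Ytwirl} then traces that coherence back to some $\operatorname{Im}(\rho_{mn})\neq 0$.

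You instead note that the $Y$-twirl sends every real $\rho$ to exactly $\mathbb{I}/d$, and you compute $Q^\star_{ijk}(\mathbb{I}/d)=\delta_{ij}/d^2$ directly. Lemma~\ref{thm:KD_coherence_equivalence} is therefore never needed. Your ``sanity check'' is in fact a complete proof on its own, without Theorem~\ref{theorem1}: it gives $r'_n=d^{2-2n}$, hence a rank-one $H_m(\mathbf r')$ with determinant exactly $0$ for all $m\ge 1$.

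Your route buys self-containment and a sharper statement: real states sit exactly on the boundary $\det[H_m(\mathbf r')]=0$. The paper's route buys a conceptual chain from imaginarity to coherence of the twirled state to KD nonpositivity.

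Your ``expected obstacle'' is not a real obstacle. For every state $\sigma$, $\sum_{ijk}Q^\star_{ijk}(\sigma)=\sum_i\sigma_{ii}=1$. So $\sum|Q^\star|=1$ forces equality in the triangle inequality, and every entry must then be a nonnegative real. This equivalence, between Theorem~\ref{theorem1}'s notion of positivity and Lemma~\ref{thm:KD_coherence_equivalence}'s condition $\sum|Q^\star|>1$, is exactly the step the paper uses implicitly when it chains those two results.
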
 
\begin{proof} 
The proof of this theorem is constructive.

If the inequality $ \det[H_{m}(\mathbf{r}{'})] < 0$ holds for some $m \in \mathbb{N}$, then by Theorem~\ref{theorem1}, the extended KD distribution $Q^{\star}(\rho')$ must be nonpositive. From Lemma~\ref{thm:KD_coherence_equivalence}, nonpositivity of the extended KD distribution is equivalent to nonzero $\ell_1$-norm coherence in the reference basis $\{\ket{a_i}\}$. Therefore, the state $\rho'$ must exhibit nonzero coherence in that basis.
Next, recall from Lemma~\ref{thm:Ytwirl} that the $Y$-twirl operation removes all real components of coherence while preserving the imaginary ones. Consequently, any coherence present in $\rho' = \mathcal{E}(\rho)$ must be purely imaginary. Hence, if $\rho'$ has a nonzero coherence component, that coherence is necessarily imaginary. Since $\rho'$ is obtained from $\rho$ via the $Y$-twirl operation, the presence of imaginary coherence in $\rho'$ implies that the original state $\rho$ also possessed imaginary coherence. Therefore, $\rho$ has nonzero imaginarity in the basis $\{\ket{a_i}\}$. This completes our proof.
\end{proof}

We now present an example validating Theorem~\ref{theorem2}, demonstrating how quantum imaginarity can be detected using our proposed moment criteria.

\subsection{Illustrative qubit example}
\label{sec:example_imaginarity}

We consider the pure single-qubit state
\begin{equation}
    \ket{\Psi}
    =
    \cos\!\left(\frac{\theta}{2}\right)\ket{0}
    +
    \sin\!\left(\frac{\theta}{2}\right)e^{i\alpha}\ket{1}, \label{qubit_example}
\end{equation}
where $0 \le \theta \le \pi$ and $0 \le \alpha \le 2\pi$. The amount of imaginarity of this state $\rho = \ket{\Psi} \bra{\Psi}$ is given by,
\begin{align}
    \mathcal{M}_{\ell_1} (\rho) = \sum_{i\neq j} \big| \text{Im} (\rho_{ij}) \big| = \big| \sin {\alpha} \sin \theta \big| 
\end{align}
To isolate imaginary coherence, we apply the $Y$-twirl operation,
\begin{equation}
    \rho'=\mathcal{E}(\rho)=\frac12(\rho+\mathcal{Y}_{01}\rho\,\mathcal{Y}_{01}).
\end{equation}
In this example, we consider  $\theta=\pi/2$, such that $M_{\ell_1} (\rho) = |\sin \alpha|$. The resulting state after the Y-twirl operation is given by,
\begin{equation}
\rho'=
\begin{pmatrix}
\frac12 & -\frac{i}{2}\sin\alpha \\
\frac{i}{2}\sin\alpha & \frac12
\end{pmatrix}.
\end{equation}
Imaginarity is therefore present whenever $\sin\alpha\neq 0$.

We now construct the extended KD distribution using the mutually unbiased basis
\begin{align}
    \ket{b_1} &= \frac{1}{\sqrt{2}}(\ket{0}+e^{i\beta}\ket{1}), \nonumber\\
    \ket{b_2} &= \frac{1}{\sqrt{2}}(\ket{0}-e^{i\beta}\ket{1}), \label{mutulally_unbiased_bases}
\end{align}
where $0 \leq \beta \leq 2\pi$. Although imaginarity in the basis \( \{ \ket{a_i} \} \) can be detected via the nonpositivity of the extended KD distribution using a mutually unbiased basis \( \{ \ket{b_k} \} \), it is important to note that the extended KD distribution itself depends on the choice of \( \{ \ket{b_k} \} \). Consequently, different parameter choices ({\it e.g.}, values of \( \beta \)) may require the evaluation of determinants of Hankel matrices of different orders to detect imaginarity using our moment-based approach. In Figure~\ref{fig:imaginarity_hierarchy}, we consider three different values of \( \beta \) and plot the negative of the determinant corresponding to the {\it minimum} order of the Hankel matrix required to detect imaginarity across the full range of \( \alpha \).

The detailed calculations for this example can be found in Appendix~\ref{app:kd_calculations}.

\begin{figure}[t]
\centering

\includegraphics[width=1\linewidth]{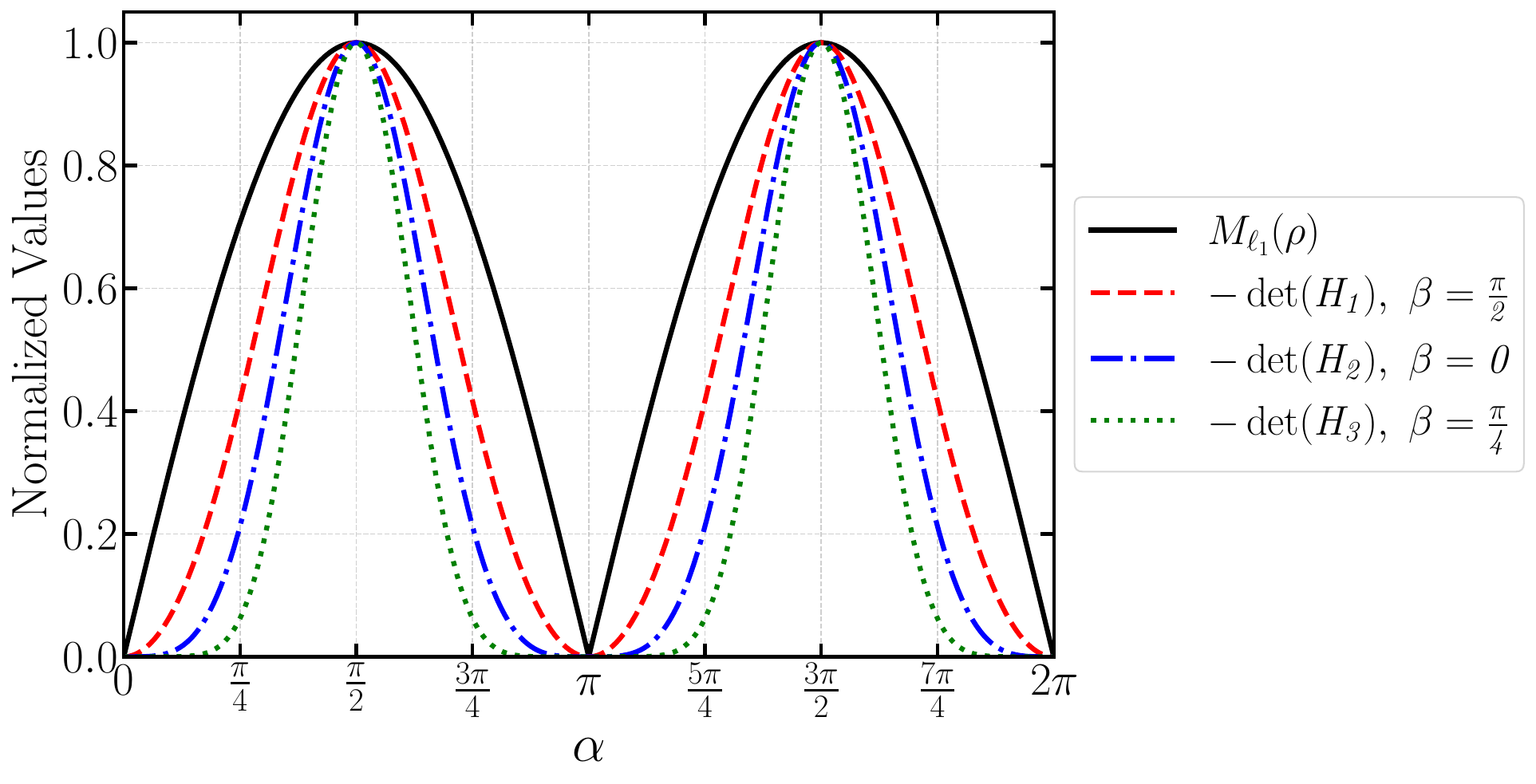}
\caption{ 
\justifying 
\small{
Amount of imaginarity for the state $\rho$ with $\theta= \pi/2$, $(M_{\ell_1 }{(\rho)}=|\sin\alpha|)$ (black) and the minimal-order Hankel determinants detecting the entire region with non-zero imaginarity is plotted as a function of $\alpha$. For $\beta=\pi/2$, detection occurs at first order ($H_1$). For $\beta=0$, detection requires $H_2$. For $\beta=\pi/4$, first and second orders fail, and detection occurs at $H_3$.
All curves vanish only at $\alpha=0,\pi,2\pi$.}
}
\label{fig:imaginarity_hierarchy}
\end{figure}

Having introduced the basic setup of a Mach–Zehnder interferometer and our moment-based method for detecting imaginarity, we now present the formalism for realizing these moments in an interferometric setup.

\section{Interferometric Scheme for the realization of moments and imaginarity detection} \label{s4}
In this section, we present a proposal to realize the moments from the visibility of an interference pattern. Furthermore, while moment-based criteria provide a sufficient method for detecting imaginarity, here we also introduce an alternative approach that enables the direct detection of imaginarity from the visibility obtained in a Mach–Zehnder interferometer.
\subsection{Realization of moments in an interferometric setup} \label{momentrealization}
The moments $r'_n$ used to detect imaginarity can be readily implemented in an experimental setup. In support of this claim, here we propose the realization of the moments from the visibility of an interference pattern. For this purpose, we use the extended KD moments $r'_{n}$ defined on the state $\rho'=\mathcal{E}(\rho)$, where $\mathcal{E}$ is the linear Y-twirl map given by Eq.~\eqref{eq:Ytwirl-map}. The moments can be represented as 
\begin{align}
    r'_n &= \sum_{i,j,k} \left[Q_{ijk} (\rho')\right]^n \nonumber \\
    &= \sum_{i,j,k} \left[ \mathrm{Tr}(\Pi_k^b \Pi_j^a \Pi_i^a \, \rho' ) \right]^n \nonumber (\text{Using Eq.~\eqref{eq:extended_KD_definition}})\\
     &= \sum_{i,j,k} \mathrm{Tr}\left[\left(\Pi_k^b \Pi_j^a \Pi_i^a \right)^{\otimes n} \, (\rho')^ {\otimes n}\right] \nonumber \\
     &= \mathrm{Tr} \left[ \left( \sum_{i,j,k} \left(\Pi_k^b \Pi_j^a \Pi_i^a \right)^{\otimes n} \right)  (\rho')^{\otimes n} \right] \nonumber \\
     & = \mathrm{Tr}\left[ S_n (\rho')^{\otimes n} \right] \;\; \;, \text{where } S_n = \sum_{i,j,k} \left(\Pi_k^{b} \Pi_j^a  \Pi_i^a \right)^{\otimes n}. \label{nonlinear_KD}
\end{align}
Here $\Pi_k^{b}, \Pi_j^a$ and $\Pi_i^a$ are the projectors associated with the measurement bases $\{\ket{b_k}\}$, $\{\ket{a_j} \}$ and $\{ \ket{a_i} \}$, respectively. Thus, the operator $S_n$ acting on $n$ copies of the state gives the $ n$th-order moments. 

It should be noted that the operator $S_n$ defined above is unitary $\forall$ $n \in \mathbb{N}$. This is evident from the following:
\begin{align*}
\Pi_k^{b}\Pi_j^{a}\Pi_i^{a}
&= \ket{b_k}\bra{b_k}\ket{a_j}\bra{a_j}\ket{a_i}\bra{a_i} \\
&= \langle b_k|a_j\rangle \langle a_j|a_i\rangle \ket{b_k}\bra{a_i} \\
&= \langle b_k|a_i\rangle \ket{b_k}\bra{a_i}.
\end{align*}
Substituting this into the definition of $S_n$ gives
\[
S_n=\sum_{i,k}\left(\langle b_k|a_i\rangle\,\ket{b_k}\bra{a_i}\right)^{\otimes n}.
\]
Taking the adjoint,
\[
S_n^\dagger=\sum_{i,k}\left(\langle a_i|b_k\rangle\,\ket{a_i}\bra{b_k}\right)^{\otimes n}.
\]
Using the orthonormality relations $\langle b_k|b_l\rangle=\delta_{kl}$ and the completeness relation
$\sum_k\ket{b_k}\bra{b_k}=\mathbb{I}$, we obtain
\begin{align*}
S_n^\dagger S_n
&=\sum_{i,j,k}\left(\langle a_i|b_k\rangle\langle b_k|a_j\rangle\,\ket{a_i}\bra{a_j}\right)^{\otimes n}\\
&=\sum_{i}\left(\ket{a_i}\bra{a_i}\right)^{\otimes n}\\
&=\mathbb{I}^{\otimes n} \; ,
\end{align*}
which is the identity operator on the $n$-copy Hilbert space $\mathcal{H}^{\otimes n}$. Hence $S_n$ is unitary.

Below, we provide an interferometric scheme for the realization of the second and third order moments. The realization of all higher-order moments follows similarly.

\textbf{\textit{Realization of the second order moment:}} The second order moment is given by 
\begin{equation}
\begin{split}
    r'_2=& \sum_{i,j,k} \left[Q_{ijk} (\rho')\right]^2 \\ & =\mathrm{Tr}\left[ S_2 (\rho')^{\otimes 2} \right],
    \end{split}
\end{equation}
Now, consider a Mach-Zehnder interferometer with an input $(\rho')^{\otimes 2}$ corresponding to the internal degree of freedom. For finding the second-order moment, we choose the unitary to be $S_2$. For this choice of input and unitary, the visibility is given by 
\begin{equation}
\begin{split}
    V^{(2)}=& |\mathrm{Tr}\left[ S_2 (\rho')^{\otimes 2} \right]| (\text{following Eq.~\eqref{visibility1}})\\ & = |r'_2| \\ & =r'_2, 
    \end{split}
\end{equation}
So the second-order moment is directly obtained from the visibility $V^{(2)}$ of an interference pattern corresponding to this choice of input and unitary.

\textbf{\textit{Realization of the third order moment:}} This is given by 
\begin{equation}
\begin{split}
    r'_3=& \sum_{i,j,k} \left[Q_{ijk} (\rho')\right]^3 \\ & =\mathrm{Tr}\left[ S_3 (\rho')^{\otimes 3} \right],
    \end{split}
\end{equation} 
By choosing the unitary to be $S_3$, and the input state to be $(\rho')^{\otimes 3}$, the visibility reduces to 
\begin{equation}
\begin{split}
    V^{(3)}=& |\mathrm{Tr}\left[ S_3 (\rho')^{\otimes 3} \right]| \, (\text{following Eq.~\eqref{visibility1}})\\ & = |r'_3| \\ & = r'_3. 
    \end{split}
\end{equation}
The last line follows from the normalization of the extended KD distribution, which ensures that the total positive contribution outweighs any negative contribution. Consequently, $r_3'$ remains positive.

By generalizing this, one can obtain the $nth$ order moment for an arbitrary $n \in \mathbb{N}$. Specifically, by choosing the input and unitary to be  $(\rho')^{\otimes n}$ and $S_n$ respectively, the corresponding moment is obtained from the visibility $V^{(n)}$. This is illustrated in Figure~\ref{fig:inteferometer_2}. Therefore, this provides a practical scheme for realizing the moments, enabling an experimentally feasible method for detecting quantum imaginarity. Note that, in addition to the moment-based detection method, the imaginarity of an unknown quantum state can also be detected directly from the visibility of a Mach–Zehnder interferometer, as discussed in the following section.

\begin{figure}[t]
\centering

\includegraphics[width=1\linewidth]{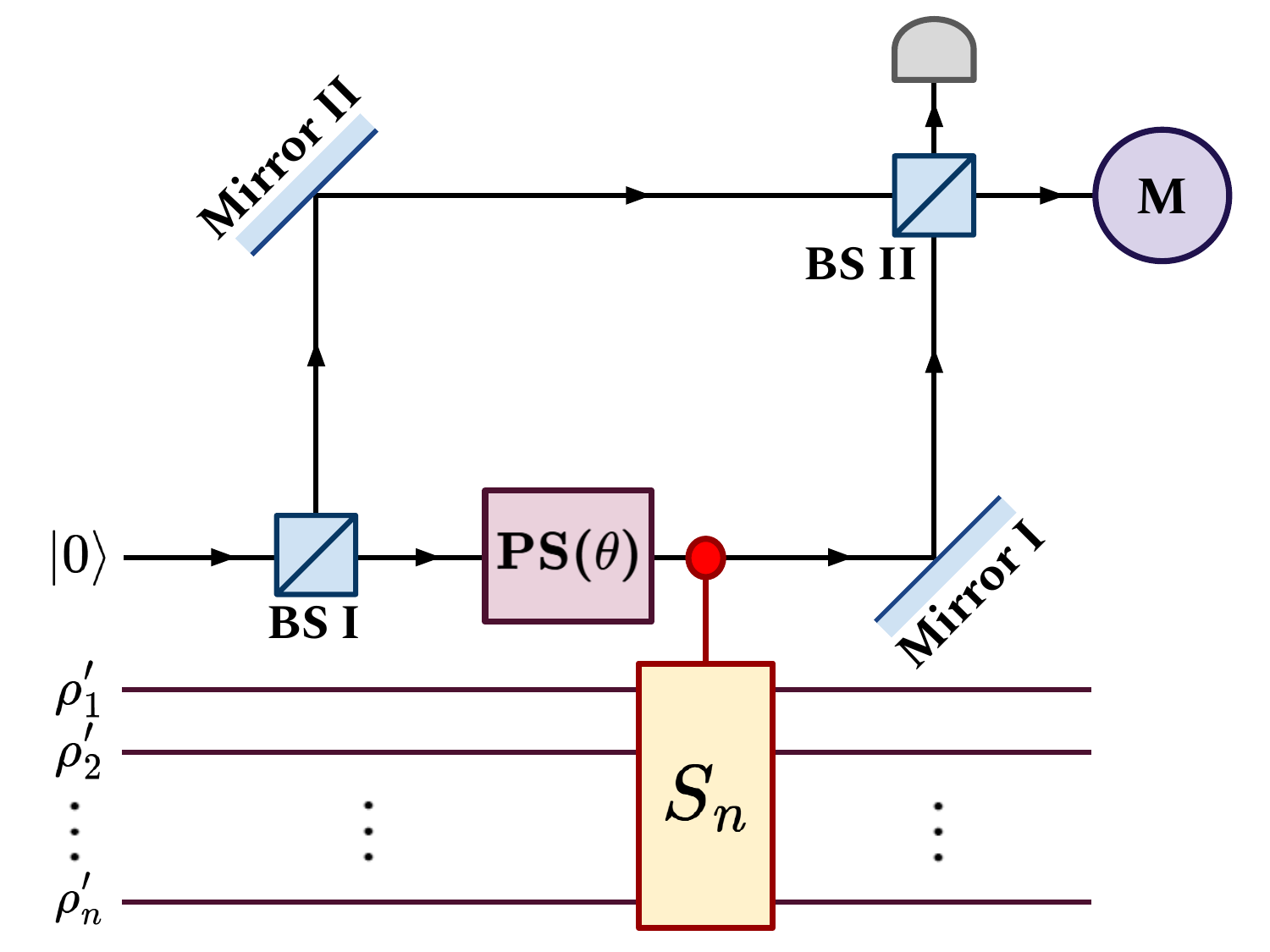}
\caption{ 
\justifying 
\small{
Generalized Mach–Zehnder interferometric scheme for measuring the $n$th-order extended KD moment. The input state is $\rho'_1 \otimes\rho'_2 \otimes \dots \otimes \rho'_n$, which represents $n$ copies of the Y-twirled state $\rho'$. The controlled operation implements the unitary $S_n$ acting on the internal degrees of freedom. The visibility of the resulting interference pattern yields $V^{(n)}$, which corresponds to the $n$th-order moment. The remaining terminologies follow the same convention as in Figure~\ref{fig:inteferometer}.
}}
\label{fig:inteferometer_2}
\end{figure}

\subsection{Direct detection of imaginarity from interferometric visibility}
In this section, we present a detailed discussion on the direct detection of imaginarity from the visibility obtained in an interference pattern.  

Quantum imaginarity is another signature of nonclassicality, akin to visibility. In what follows, we establish a connection between these two quantities and present our theorem, which enables the direct detection of imaginarity through the visibility measured in an interferometric setup. 

\begin{theorem}\label{theorem3}
Let $\rho \in \mathcal{D}(\mathbb{C}^d)$ be an arbitrary quantum state, and let $\{G_j^{I}\}$ denote the set of imaginary operators (operators having non-zero imaginary matrix elements) in the $d$-dimensional generalized Pauli basis. For each $G_j^{I}$, define the unitary operator 
\begin{equation}
U_j^I (\theta)= \exp{\left( {i \theta G_j^{I}} \right)}, \qquad \text{where} \; \theta \in \mathbb{R}.
\end{equation}
The corresponding interferometric visibility for this unitary operator (as defined in Eq.~\eqref{visibility1}) is
\begin{equation}
V_j (\theta) = \left| \mathrm{Tr}\left[ U_j^I (\theta)\,\rho\right] \right|,
\end{equation}
and the total visibility is given by $V (\theta)= \sum_j V_j (\theta)$. Then $V \left(\theta = \frac{\pi}{2}\right)$ is equal to the $\ell_1$-norm measure of imaginarity, i.e., 
\begin{equation}
    V \left(\theta = \frac{\pi}{2}\right)= \mathcal{M}_{\ell_1}. 
\end{equation}
Hence, $V \left(\theta = \frac{\pi}{2}\right) $ vanishes if and only if the quantum state $\rho$ has zero imaginarity.
\end{theorem}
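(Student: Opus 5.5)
The plan is to compute each $V_j(\pi/2)$ in closed form and then sum over $j$. First I would fix what the ``imaginary operators in the $d$-dimensional generalized Pauli basis'' are. The natural reading, consistent with Eq.~\eqref{eq:antisymmetric_generators} and Lemma~\ref{thm:Ytwirl}, is that they are the antisymmetric generators $G^I_{pq}=\mathcal{Y}_{pq}=i(\ket{p}\bra{q}-\ket{q}\bra{p})$ for $p<q$. These are Hermitian, so $U^I_j(\theta)$ is unitary and the visibility formula Eq.~\eqref{visibility1} applies.

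The key algebraic step is to exponentiate $\mathcal{Y}_{pq}$. Write $P_{pq}=\ket{p}\bra{p}+\ket{q}\bra{q}$. A direct check gives $\mathcal{Y}_{pq}^2=P_{pq}$ and $\mathcal{Y}_{pq}^3=\mathcal{Y}_{pq}$. Summing the exponential series therefore yields
\begin{equation*}
U^I_{pq}(\theta)=(\mathbb{I}-P_{pq})+\cos\theta\,P_{pq}+i\sin\theta\,\mathcal{Y}_{pq}.
\end{equation*}
Next, I would compute the trace $\mathrm{Tr}(\mathcal{Y}_{pq}\rho)=i(\rho_{qp}-\rho_{pq})=2\,\mathrm{Im}(\rho_{pq})$. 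At $\theta=\pi/2$ this gives
\begin{equation*}
\mathrm{Tr}[U^I_{pq}(\pi/2)\rho]=\bigl(1-\rho_{pp}-\rho_{qq}\bigr)+2i\,\mathrm{Im}(\rho_{pq}).
\end{equation*}
For $d=2$ the real part vanishes because $\rho_{00}+\rho_{11}=1$. Then $V(\pi/2)=2|\mathrm{Im}\rho_{01}|=\sum_{i\neq j}|\mathrm{Im}\rho_{ij}|=\mathcal{M}_{\ell_1}(\rho)$, which is the claimed identity. The ``if and only if'' statement then follows immediately from faithfulness of $\mathcal{M}_{\ell_1}$.

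The main obstacle is $d>2$. There the population term $1-\rho_{pp}-\rho_{qq}$, the weight of $\rho$ outside the $\{p,q\}$ block, survives in $V_{pq}$. With the stated $V_j=|\mathrm{Tr}[U^I_j\rho]|$ one gets
\begin{equation*}
V_{pq}(\pi/2)=\sqrt{(1-\rho_{pp}-\rho_{qq})^2+4\,\mathrm{Im}(\rho_{pq})^2}.
\end{equation*}
This is at least $2|\mathrm{Im}\rho_{pq}|$, with equality iff $\rho$ is supported on $\mathrm{span}\{\ket p,\ket q\}$. For example, the real state $\ket{0}\bra{0}$ in $d=3$ gives $V_{12}=1\neq 0$.

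I would first try the alternative reading in which the generalized Pauli operators are unitary involutions, $G^2=\mathbb{I}$, such as tensor products of Paulis containing an odd number of $\sigma_y$ factors. Under that reading $U(\pi/2)=iG$ and $V_j=|\mathrm{Tr}(G\rho)|$. However, $\sum_j|\mathrm{Tr}(G_j\rho)|$ is the Pauli-basis $\ell_1$ norm of the antisymmetric part, and it agrees with $\sum_{i\neq j}|\mathrm{Im}\rho_{ij}|$ only for $d=2$.

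So, as worded, the argument I can carry through establishes the identity exactly for qubits. For $d>2$ the most it yields is the lower bound $V(\pi/2)\ge\mathcal{M}_{\ell_1}(\rho)$, with equality characterized above. It does not give vanishing on all real states. I expect the statement for general $d$ to fail unless the population term is removed, for instance by twirling first so that $\rho_{pp}=1/d$ and then subtracting the known constant $1-2/d$ in quadrature. I would flag this in the write-up rather than paper over it.
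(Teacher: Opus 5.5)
Your computation is correct, and it follows the same route as the paper's proof. Both use $\mathcal{Y}_{pq}^2=P_{pq}$ to sum the exponential series, set $\theta=\pi/2$, and evaluate the trace via $\mathrm{Tr}(\mathcal{Y}_{pq}\rho)=2\,\mathrm{Im}(\rho_{pq})$. Your first reading of $G^I_j$ as $\mathcal{Y}_{pq}$ is exactly the paper's, so the Pauli-string alternative is moot.

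The mismatch with the stated theorem is an error in the paper, not in your argument. The paper writes $Y_{pq}^{2k}=P_{pq}$ and applies it to the $k=0$ term of the even series, whereas $Y_{pq}^{0}=\mathbb{I}$. It therefore obtains $e^{i\theta Y_{pq}}=\cos\theta\,P_{pq}+i\sin\theta\,Y_{pq}$ and $U^I_j(\pi/2)=iY_{pq}$, losing exactly the block $\mathbb{I}-P_{pq}$ that you keep. For $d>2$ the operator $iY_{pq}$ has rank two and is not unitary. The paper itself notes that $Y_{pq}$ is not unitary for $d>2$, which is why it exponentiates in the first place. The qutrit check in the appendix repeats the mistake by using $iG^I_j$ as the interferometer unitary. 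So the paper's identity $V(\pi/2)=\mathcal{M}_{\ell_1}(\rho)$ is established only for $d=2$, where $P_{01}=\mathbb{I}$. Your counterexample $\rho=\ket{0}\bra{0}$ in $d=3$, with $V_{12}(\pi/2)=1$, is valid, and you were right to flag the statement rather than force a proof.

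You can sharpen the conclusion. You have $V_{pq}(\pi/2)\ge 1-\rho_{pp}-\rho_{qq}\ge 0$, and each index lies in $d-1$ pairs. Hence $V(\pi/2)\ge\sum_{p<q}(1-\rho_{pp}-\rho_{qq})=\frac{d(d-1)}{2}-(d-1)=\frac{(d-1)(d-2)}{2}$, which is at least $1$ for \emph{every} state once $d\ge 3$. So the total visibility never vanishes, and the ``if and only if'' clause fails for all real states. The equality case you characterized also cannot hold for all pairs at once, so $V(\pi/2)>\mathcal{M}_{\ell_1}(\rho)$ strictly.

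A repair simpler than twirling is to use the phase information that the visibility discards. The population term is the real part of $\mathrm{Tr}[U^I_{pq}(\pi/2)\rho]$, and $2\,\mathrm{Im}(\rho_{pq})$ is its imaginary part. Use the paper's intensity formula with the phase shifter set to $\phi=\pi/2$ (the angle called $\theta$ there). Then $2I_{pq}-1=\mathrm{Re}\bigl[-i\,\mathrm{Tr}(U^I_{pq}(\pi/2)\rho)\bigr]=2\,\mathrm{Im}(\rho_{pq})$, hence $\sum_{p<q}|2I_{pq}-1|=\mathcal{M}_{\ell_1}(\rho)$ in every dimension. Your twirl-and-subtract variant also works, as does subtracting the separately measured populations in quadrature.
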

\begin{proof}
Any quantum state $\rho \in \mathcal{D}(\mathbb{C}^d)$ can be expanded in a generalized Pauli basis as
\begin{equation}
\rho = \sum_j b_j G_j \;,
\end{equation}
where $b_j \in \mathbb{R}$ and $\{G_j\}$ are Hermitian operators satisfying $\mathrm{Tr}(G_j G_k) = 2\,\delta_{jk}$.

This basis can be decomposed into real $\{G_j^{R}\}$ and imaginary $\{G_{j}^{I}\}$ parts as,
\begin{equation}
    \rho=\sum_i b_i^R G_i^{R}+\sum_j b_j^I G_j^{I}
\end{equation}
Note that, $\{G_j^{I}\}$ is exactly the set of complex antisymmetric generators $\{Y_{pq}\}$ defined in Eq.~\eqref{eq:antisymmetric_generators}, i.e.,
\begin{equation}
G_{j}^{I} \equiv Y_{pq} = i\big(\ket{a_p}\bra{a_q} - \ket{a_q}\bra{a_p}\big), \quad p<q,
\end{equation}
where the antisymmetric generators are written in the basis $\{ \ket{a_i} \}_{i=0}^{d-1}$. The coefficients of these antisymmetric generators $\{ b_{j}^I \}$ characterize the imaginarity of the state. The $\ell_1$-norm measure of imaginarity for this state is given by \cite{Chen2023},
\begin{equation}
    \mathcal{M}_{\ell_1} (\rho) = \sum_{j\neq k} |\operatorname{Im} (\rho_{jk})| = 2 \sum_{j}|b_j^I|
\end{equation}

For $d=2$, the operators $Y_{pq}$ are Hermitian as well as unitary, however for $d>2$, the operators $Y_{pq}$ are Hermitian, but not necessarily unitary. Therefore, to access them experimentally in the interferometric setup, we consider the unitary operator $U_{j}^I$ generated by $G_j^I (= Y_{pq})$,
\begin{equation*}
U_{j}^I(\theta) = \exp \left( i\theta G_j^I \right) = \exp\left( i\theta Y_{pq} \right).
\end{equation*}
To obtain an analytical form for $U_{j}^I$, first, we compute the square of the operator $Y_{pq}$,
\begin{align}
Y_{pq}^2 
&= -\left(\ket{a_p}\bra{a_q} - \ket{a_q}\bra{a_p}\right)^2 \nonumber \\
&= \left(\ket{a_p}\bra{a_p} + \ket{a_q}\bra{a_q}\right) \nonumber \\
&=P_{pq} \;\; ,
\end{align}
where $P_{pq}$ is the projector onto the subspace spanned by $\{\ket{a_p},\ket{a_q}\}$. Consequently, the higher powers satisfy
\begin{align}
Y_{pq}^{2k} = P_{pq}, \qquad Y_{pq}^{2k+1} = Y_{pq}.
\end{align}
Using the power series expansion of the exponential,
\begin{align}
e^{i\theta Y_{pq}} 
&= \sum_{n=0}^\infty \frac{(i\theta Y_{pq})^n}{n!} \nonumber \\
&= \sum_{k=0}^\infty \frac{(i\theta)^{2k}}{(2k)!} Y_{pq}^{2k}
+ \sum_{k=0}^\infty \frac{(i\theta)^{2k+1}}{(2k+1)!} Y_{pq}^{2k+1} \nonumber \\
&= \left[\sum_{k=0}^\infty \frac{(i\theta)^{2k}}{(2k)!}\right] P_{pq}
+ \left[\sum_{k=0}^\infty \frac{(i\theta)^{2k+1}}{(2k+1)!}\right] Y_{pq} \nonumber \\
&= \cos\theta\, P_{pq} + i\sin\theta\, Y_{pq}.
\end{align}
Choosing $\theta = \frac{\pi}{2}$, we obtain
\begin{equation}
U_{j}^I \left( \theta = \frac{\pi}{2} \right) = i Y_{pq}.
\end{equation}
Now, the interferometric visibility corresponding to this unitary, as defined in Eq.~\eqref{visibility1}, becomes
\begin{align}
V_{j} \left(\theta = \frac{\pi}{2}\right) &= \left|\mathrm{Tr}\left(U_{j}^I \left(\theta = \frac{\pi}{2}\right)\,\rho \right)\right| \nonumber \\
    &= \left|\mathrm{Tr}(i Y_{pq}\,\rho)\right| \nonumber\\
    &= 2 |b_j^I|.
\end{align}

The total visibility is then given by,
\begin{equation}
    V \left(\theta = \frac{\pi}{2}\right) = 2 \sum_j |b_j^I| =\mathcal{M}_{\ell_1}.
\end{equation} 
which clearly indicates that the total visibility $V \left(\theta = \frac{\pi}{2}\right)$ is equal to the $\ell_1$-norm measure of imaginarity, and it vanishes if and only if $b_j^{I}=0$, i.e., the quantum state $\rho$ has zero imaginarity.
This completes the proof.
\end{proof}

For a better illustration of this detection criterion, supporting examples are provided in Appendix \ref{appendixC}. 

However, this direct detection based approach has limitations. As the dimension of the state increases, the number of such complex operators increases polynomially, and it becomes cumbersome to design interferometric setups corresponding to all such complex operators. Whereas the detection of imaginarity using moments (as discussed in Sec.\ref{momentrealization}) offers a viable alternative in terms of the resources consumed. 

To see this more clearly, note that the imaginary component of a density operator $\rho$ in a $d$-dimensional Hilbert space lies in the subspace spanned by the antisymmetric operators defined in Eq.~\eqref{eq:antisymmetric_generators}. The total number of such antisymmetric operators is $M=d(d-1)/2$ \cite{nielsen2010quantum}. For a given unknown state, the direct strategy requires estimating the expectation value of the unitary corresponding to each antisymmetric operator individually. The overall copy complexity of this direct approach scales as
\begin{equation}
N_{\text{direct}} \sim O \left({d^{2}}\right) .
\end{equation}
In contrast, the moment-based scheme probes imaginarity globally through nonlinear functionals of the state, as shown in Eq.~\eqref{nonlinear_KD}. Moments can be estimated with a number of copies that scales only logarithmically with the dimension of the system \cite{aaronson2018shadow, huang2020predicting, cieslinski2024analysing,PhysRevLett.125.200501, yu2021optimal, Neven2021,lkcb-nqqb}. This leads to an effective scaling,
\begin{equation}
N_{\text{moment}} \sim O \left( {\log d} \right).
\end{equation}
Consequently, it becomes increasingly relevant in higher dimensional Hilbert spaces, where constructing a large number of interferometric configurations becomes experimentally demanding.

\section{Conclusion}\label{s5}
The role of complex numbers in quantum theory has long attracted conceptual interest, and recent developments have clarified that intrinsically complex features of quantum states can act as a resource enabling advantages in several quantum information processing tasks. Within the resource theoretic framework of quantum imaginarity \cite{hickey2018quantifying,wu2021operational}, such complex features have been linked to operational advantages in tasks including quantum state and channel discrimination \cite{wu2021operational,wu2024resource}, quantum communication \cite{elliott2025strict}, multiparameter metrology \cite{miyazaki2022imaginarity}, and nonlocal correlations \cite{wei2024nonlocal,datta2025efficient}. These developments make it important to develop reliable and experimentally feasible methods for identifying imaginarity in realistic settings. In this work, we have proposed a framework to detect imaginarity in an unknown quantum state based on experimentally accessible moments of the extended Kirkwood–Dirac distribution. 

Our approach complements existing methods for detecting or quantifying imaginarity \cite{xue2021quantification,wu2021resource,xu2024quantifying,fernandes2024unitary,zhang2025imaginarity,guo2025quantifying} by emphasizing scalability and experimental feasibility. 
Unlike generic detection methods for imaginarity in unknown quantum states, which typically require a number of state copies that grows polynomially with the system dimension, our moment-based strategy allows the relevant quantities to be estimated using a number of copies that scales only logarithmically, thereby significantly reducing the experimental cost \cite{aaronson2018shadow,huang2020predicting,lkcb-nqqb}. In particular, the detection criterion is formulated entirely in terms of moments, thereby avoiding the need to reconstruct elements of the density matrix. This feature makes the method particularly suited for high-dimensional and many-body systems where conventional reconstruction-based approaches become impractical. We have further proposed an experimental method for calculating these moments from the visibility of a Mach-Zehnder interference pattern, providing a clear operational perspective for the detection of quantum imaginarity.

The framework introduced here also opens several avenues for further investigation. An important direction is the development of moment-based techniques capable of providing experimentally accessible bounds on quantitative measures of imaginarity. Another natural extension is the application of similar ideas to dynamical scenarios, where the generation or evolution of imaginarity under quantum processes could be probed \cite{roden_2016}. More generally, the moment-based methodology explored in this work may offer a practical strategy for detecting other quantum resources through experimentally accessible quantities.

\section{Acknowledgements}
S.M. and B.M. acknowledges Subrata Bera and Indranil Biswas for initial discussions on imaginarity. B.M. acknowledges the DST INSPIRE fellowship programme for financial support. 
\appendix 
\section{Proof of Theorem \ref{theorem1}} \label{appendixA}
Let $\rho$ be a $d$-dimensional quantum state. We consider its extended Kirkwood–Dirac (KD) distribution $\{Q^{\star}_{ijk}\}$ constructed with respect to two mutually unbiased bases (MUBs). The first two indices are associated with the same basis $\{\ket{a_i}\}_{i=1}^d$, while the third index corresponds to a mutually unbiased basis $\{\ket{b_k}\}_{k=1}^d$. The indices $i,j,k$ each take $d$ possible values, and hence the extended KD distribution consists of $d^3$ elements. We can now arrange these $d^3$ elements along the diagonal of a matrix of dimension $d^3 \times d^3$. Denoting the diagonal entries by $\lambda_1,\lambda_2,\dots,\lambda_{d^3}$, we write
\begin{align}
    D = \mathrm{diag}(\lambda_1,\lambda_2,\dots,\lambda_{d^3}).
\end{align}
Since the extended KD distribution is positive and normalized, its entries satisfy
\begin{align}
\lambda_\ell \ge 0,
\qquad
\sum_{\ell=1}^{d^3} \lambda_\ell = 1.
\end{align}
It follows that the diagonal matrix $D$ is positive semidefinite and has unit trace.

We now consider the sequence of extended KD moments 
$\mathbf{r}=(r_1,r_2,\ldots,r_{2m+1})$, defined in Eq.~\eqref{extendedkdmoments}, and construct the corresponding Hankel matrix $H_m(\mathbf{r})$.  
Since these moments arise from a positive distribution supported on $\{\lambda_\ell\}$, the Hankel matrix admits the Vandermonde decomposition
\cite{boley1997vandermonde,tyrtyshnikov1994bad,heinig2013algebraic}
\begin{equation}
H_m(\mathbf{r}) = \mathcal{V}_m D \mathcal{V}_m^{T}.
\end{equation}
where the Vandermonde matrix $\mathcal{V}_m$ is given by
\begin{align}
\mathcal{V}_m =
\begin{pmatrix}
1 & 1 & \cdots & 1 \\
\lambda_1 & \lambda_2 & \cdots & \lambda_{d^3} \\
\vdots & \vdots & \ddots & \vdots \\
\lambda_1^{m} & \lambda_2^{m} & \cdots & \lambda_{d^3}^{m}
\end{pmatrix}.
\end{align}

To prove the positivity of the Hankel matrix $H_m(\mathbf{r})$, we take an arbitrary real vector 
$\boldsymbol{x}=(x_1,\dots,x_{m+1}) \in \mathbb{R}^{m+1}$. Then
\begin{align}
\boldsymbol{x} H_m(\mathbf{r}) \boldsymbol{x}^T
&= \boldsymbol{x} \mathcal{V}_m D \mathcal{V}_m^T \boldsymbol{x}^T \nonumber \\
&= \boldsymbol{z} D \boldsymbol{z}^T,
\end{align}
where $\boldsymbol{z}=\boldsymbol{x}\mathcal{V}_m$. 
Its components are
\begin{align}
z_\ell = \sum_{j=1}^{m+1} x_j \lambda_\ell^{\,j-1},
\qquad
\ell=1,\dots,d^3.
\end{align}
Therefore,
\begin{align}
\boldsymbol{z} D \boldsymbol{z}^T
= \sum_{\ell=1}^{d^3} \lambda_\ell z_\ell^2.
\end{align}
Since $\lambda_\ell \ge 0$, each term in the sum is non-negative.
Consequently,
\[
\boldsymbol{x} H_m(\mathbf{r}) \boldsymbol{x}^T \ge 0
\]
for all real vector $\boldsymbol{x}$, which shows that $H_m(\mathbf{r})$ is positive 
semidefinite. Consequently, all principal minors of $H_m(\mathbf{r})$ are non-negative, therefore
\[
\det[H_m(\mathbf{r})] \ge 0.
\]

This completes the proof.

\section{Explicit calculation of extended KD moments and Hankel determinants for the Example in Sec.~\ref{sec:example_imaginarity}}
\label{app:kd_calculations}

For $\theta=\pi/2$, the density operator for the state in Eq.~\eqref{qubit_example} is
\begin{equation}
\rho =
\frac{1}{2} \begin{pmatrix}
1 &  \exp({-i\alpha}) \\
 \exp ({i\alpha}) & 1
\end{pmatrix}.
\end{equation}

Applying the $Y$-twirl map, $ \rho'=\frac12(\rho+\sigma_y\rho\sigma_y)$,
we get,
\begin{equation}
\rho'=
\frac12 \begin{pmatrix}
1 & -i \sin\alpha \\
i \sin\alpha & 1
\end{pmatrix}.
\end{equation}

We now calculate the extended KD elements as defined in Eq.~\eqref{eq:extended_KD_definition}, for the state $\rho'$, using the computational basis $\{ \ket{a_i}\}$, and the mutually unbiased basis $\{\ket{b_j}\}$ introduced in Eq.~\eqref{mutulally_unbiased_bases},
\begin{align}
&Q^{\star}_{1,1,1}=Q^{\star}_{1,1,2}=Q^{\star}_{2,2,1}=Q^{\star}_{2,2,2}=\frac14, \nonumber \\
&Q^{\star}_{1,2,2}= = + \frac{i}{4} e^{ i\beta}\sin\alpha, \quad Q^{\star}_{1,2,1} = - \frac{i}{4} e^{ i\beta}\sin\alpha, \nonumber \\
&Q^{\star}_{2,1,1}= = + \frac{i}{4} e^{- i\beta}\sin\alpha, \quad Q^{\star}_{2,1,2} = - \frac{i}{4} e^{ -i\beta}\sin\alpha.
\end{align}

With $X=\sin^2\alpha$, the first seven extended KD moments are

\begin{align}
r_1 &= 1, \quad r_2= \frac14 - \frac14 X \cos(2\beta), \quad r_3 = \frac1{16}, \nonumber\\
r_4 &= \frac1{64} + \frac1{64} X^2 \cos(4\beta), \quad r_5 = \frac1{256}, \nonumber\\
r_6 &= \frac1{1024} + \frac1{1024} X^3 \cos(6\beta), \quad r_7 = \frac1{4096}.
\end{align}

We now determine the range of $\beta$ ( with $\alpha \neq 0$) for which the Hankel determinants are negative. The determinant of the first order Hankel matrix is given by,
\begin{equation}
\det(H_1)
=
\frac{X}{2^4}
\left(
2\cos(2\beta)
-
X\cos^2(2\beta)
\right).
\end{equation}

Since $X/2^4>0$, the determinant is negative whenever
\begin{equation}
2\cos(2\beta) - X\cos^2(2\beta) < 0,
\end{equation}
which leads to the condition, 
\begin{equation}
\det(H_1) < 0
\quad\Longleftrightarrow\quad
\cos(2\beta) < 0.
\end{equation}

Equivalently,
\begin{equation}
\beta \in
\left(\frac{\pi}{4},\frac{3\pi}{4}\right)
\cup
\left(\frac{5\pi}{4},\frac{7\pi}{4}\right).
\end{equation}
Thus $\det (H_1)$ detects the entire range of imaginarity when $\beta = {\pi}/{2}$, but fails when $\beta=0$ or $\beta =\pi/4$.

Now we consider the determinant of the second order Hankel matrix, which is given by,
\begin{equation}
\det(H_2)
=
-\frac{X^2}{2^{11}}
\left(
4\cos(4\beta)
+
X^2\cos^2(4\beta)
\right).
\end{equation}

Since the prefactor is negative, the determinant is negative whenever
\begin{equation}
4\cos(4\beta)
+
X^2\cos^2(4\beta)
>0.
\end{equation}

This reduces to
\begin{equation}
\det(H_2) < 0
\quad\Longleftrightarrow\quad
\cos(4\beta) > 0.
\end{equation}

That is,
\begin{equation}
\beta \in
\bigcup_{k=0}^{3}
\left(
\frac{k\pi}{2}-\frac{\pi}{8},
\frac{k\pi}{2}+\frac{\pi}{8}
\right),
\end{equation}
which justifies why $\det (H_2)$ successfully detects the entire range of imaginarity when $\beta =0$, but fails for $\beta = \pi/4$.

The determinant of the third-order Hankel matrix is given by,
\begin{equation}
\det(H_3)
=
-\frac{X^4}{2^{18}}
\sin^2(2\beta).
\end{equation}

Since $\sin^2(2\beta)\ge0$, we obtain
\begin{equation}
\det(H_3) < 0
\quad\Longleftrightarrow\quad
\sin^2 (2\beta) > 0.
\end{equation}

Equivalently,
\begin{equation}
\beta \neq 0,\frac{\pi}{2},\pi,\frac{3\pi}{2}.
\end{equation}

This justifies why the determinant of the third-order Hankel matrix successfully detects the imaginarity of $\rho$, when $\beta = \pi/4$.

\section{Illustrative examples in support of Theorem \ref{theorem3}} \label{appendixC}

In this appendix, we illustrate the detection criterion presented in Theorem~\ref{theorem3} through simple examples in low-dimensional systems. In particular, we consider qubit and qutrit states and show how the interferometric visibility directly reveals the presence of imaginary coherence. For comparison, we also relate the visibility to the $\ell_1$-norm measure of imaginarity.

\subsection*{Qubit case ($d=2$)}

In this case there is only one antisymmetric operator,
\begin{equation}
G^I = i(\ket{0}\bra{1}-\ket{1}\bra{0}).
\end{equation}

Consider a general qubit density matrix
\begin{equation}
\rho =
\begin{pmatrix}
a & b\\
b^* & 1-a
\end{pmatrix},
\end{equation}
where $a\in[0,1]$ and $|b|^2\leq a(1-a)$. Writing $b=x+iy$ with $x,y\in\mathbb{R}$, the expectation value of $G^I$ becomes
\begin{equation}
\Tr(G^I \rho) = i(b^*-b) = 2y .
\end{equation}

If the unitary in the interferometer is chosen as $i G^I$, the corresponding visibility becomes
\begin{equation}
V=|\Tr(G^I \rho)|=2|y|.
\end{equation}

Thus the visibility is nonzero if and only if the imaginary part of the coherence element $\rho_{01}$ is nonzero. The $l_1$-norm measure of imaginarity for this state is
\begin{equation}
\mathcal{M}_{l_1}(\rho)=2|\mathrm{Im}(b)|=2|y|,
\end{equation}
which coincides with the visibility obtained in the interferometric scheme. Hence the interferometric visibility directly quantifies the imaginarity of the qubit state.

\subsection*{Qutrit case ($d=3$)}

For $d=3$, the antisymmetric operators are
\begin{align}
G^I_1 &= i(\ket{0}\bra{1}-\ket{1}\bra{0}),\\
G^I_2 &= i(\ket{0}\bra{2}-\ket{2}\bra{0}),\\
G^I_3 &= i(\ket{1}\bra{2}-\ket{2}\bra{1}).
\end{align}

Consider a general qutrit density matrix
\begin{equation}
\rho =
\begin{pmatrix}
a & b & c\\
b^* & d & e\\
c^* & e^* & f
\end{pmatrix},
\end{equation}
where the matrix elements satisfy the conditions as in Eq.~\eqref{positivity_rho}.
Writing
\(
b=x_1+i y_1, \quad
c=x_2+i y_2, \quad
e=x_3+i y_3,
\)
the expectation values of the corresponding unitary operators become
\begin{align}
\Tr(iG^I_1 \rho) &= 2iy_1, \\
\Tr(iG^I_2 \rho) &= 2iy_2, \\
\Tr(iG^I_3 \rho) &= 2iy_3.
\end{align}

Accordingly, the visibilities obtained from the interferometric scheme are
\begin{align}
V_{1} &= 2|y_1|,\\
V_{2} &= 2|y_2|,\\
V_{3} &= 2|y_3|.
\end{align}

Using Theorem \eqref{theorem3}, the total visibility becomes
\begin{equation}
V = V_{1}+V_{2}+V_{3}
= 2\left(|y_1|+|y_2|+|y_3|\right).
\end{equation}

Moreover, the $l_1$-norm measure of imaginarity for a qutrit state is
\begin{equation}
\mathcal{M}_{l_1}(\rho) = 2\left(|y_1|+|y_2|+|y_3|\right),
\end{equation}
which matches the total visibility obtained from the interferometric scheme. Therefore, the presence of at least one nonzero imaginary coherence element leads to a nonzero visibility, which is consistent with Theorem~\ref{theorem3}.
 
These examples illustrate how the proposed interferometric scheme directly captures the imaginary components of the density matrix through experimentally measurable visibilities.

\bibliography{main}

\end{document}